  \providecommand\BibTeX{{%
    \normalfont B\kern-0.5em{\scshape i\kern-0.25em b}\kern-0.8em\TeX}}}
\DeclareMathOperator*{\var}{Var}	%Variance
\DeclareMathOperator*{\bias}{Bias}	%Bias
\newtheorem{theorem}{Theorem}
\newtheorem{corollary}[theorem]{Corollary}
\newtheorem{remark}{Remark} % Use \normalfont inside the remark in IEEE Journals
\newtheorem{assumption}{Assumption}
\newtheorem*{problemdefinition}{Problem Definition} %Unnumbered problem definition
\newcommand{\degdist}{p} %Degree distribution
\newcommand{\neighbordegdist}{q} %Neighbor degree distribution
\newcommand{\empiricaldegdist}{{\hat{p}}} %Empirical degree distribution
\newcommand{\empiricalneighbordegdist}{{\hat{q}}} %Empirical neighbor degree distribution
\newcommand{\tailscope}{\hat{p}_{\text{tail-scope}}}%Tail-scope estimate of degree distribution
\newcommand{\samplesize}{n} %Sample size
\newcommand{\mindegree}{k_{\text{min}}} %Minimum degree
\newcommand{\vanillaMLE}{\hat{\alpha}_{\text{vl}}} %vanilla MLE
\newcommand{\vanillaDiscreteMLE}{\hat{\alpha}_{\text{vl-d}}} %vanilla discrete MLE
\newcommand{\vanillaMLEexponential}{\hat{\lambda}_{\text{vl}}} %vanilla MLE exponential
\newcommand{\fpMLE}{\hat{\alpha}_{\text{fp}}} %FP based MLE
\newcommand{\fpDiscreteMLE}{\hat{\alpha}_{\text{fp-d}}} %FP based discrete MLE
\newcommand{\fpMLEexponential}{\hat{\lambda}_{\text{fp}}} %FP based MLE exponential
\newcommand{\loglikelihoodX}{\mathcal{L}_{\text{vl}}} %Log-likelihood - vanilla method
\newcommand{\loglikelihoodY}{\mathcal{L}_{\text{fp}}} %Log-likelihood - FP based method
\newcommand{\CRLBvanilla}{{\text{CRLB}}_{\text{vl}}} %CRLB - vanilla method
\newcommand{\CRLBfp}{{\text{CRLB}}_{\text{fp}}} %CRLB - FP based method
\newcommand{\ntail}{n_{\mathrm{tail}}} %Sample size
\newcommand{\CCDF}{P} %CCDF
\begin{document}

%%
%% The "title" command has an optional parameter,
%% allowing the author to define a "short title" to be used in page headers.
\title{Maximum Likelihood Estimation of  Power-law Degree Distributions via Friendship Paradox based Sampling}

%%
%% The "author" command and its associated commands are used to define
%% the authors and their affiliations.
%% Of note is the shared affiliation of the first two authors, and the
%% "authornote" and "authornotemark" commands
%% used to denote shared contribution to the research.
\author{Buddhika Nettasinghe}
%\authornote{Both authors contributed equally to this research.}
\email{dwn26@cornell.edu}
\author{Vikram Krishnamurthy}
%\authornotemark[1]
\email{vikramk@cornell.edu}
\affiliation{%
	\institution{School of Electrical and Computer Engineering, Cornell University}
	\streetaddress{Frank H.T. Rhodes Hall}
	\city{Ithaca}
	\state{New York}
	\postcode{14850}
}

%%
%% By default, the full list of authors will be used in the page
%% headers. Often, this list is too long, and will overlap
%% other information printed in the page headers. This command allows
%% the author to define a more concise list
%% of authors' names for this purpose.
\renewcommand{\shortauthors}{Nettasinghe and Krishnamurthy, et al.}

%%
%% The abstract is a short summary of the work to be presented in the
%% article.
\begin{abstract}
This paper considers the problem of estimating a power-law degree distribution of an undirected network using sampled data. Although power-law degree distributions are ubiquitous in nature, the widely used parametric methods for estimating them~(e.g.~linear regression on double-logarithmic axes, maximum likelihood estimation with uniformly sampled nodes) suffer from the large variance introduced by the lack of data-points from the tail portion of the power-law degree distribution. As a solution, we present a novel maximum likelihood estimation approach that exploits the \textit{friendship paradox} to sample more efficiently from the tail of the degree distribution. We analytically show that the proposed method results in a smaller bias, variance and a  Cram\`{e}r-Rao lower bound compared to the vanilla maximum-likelihood estimate obtained with uniformly sampled nodes~(which is the most commonly used method in literature). Detailed numerical and empirical results are presented to illustrate the performance of the proposed method under different conditions and how it compares with alternative methods. We also show that the proposed method and its desirable properties (i.e.~smaller bias, variance and Cram\`{e}r-Rao lower bound compared to vanilla method based on uniform samples) extend to parametric degree distributions other than the power-law such as exponential degree distributions as well. All the numerical and empirical results are reproducible and the code is publicly available on Github.
\end{abstract}

%%
%% The code below is generated by the tool at http://dl.acm.org/ccs.cfm.
%% Please copy and paste the code instead of the example below.
%%
\begin{CCSXML}
	<ccs2012>
	<concept>
	<concept_id>10002950.10003648.10003662</concept_id>
	<concept_desc>Mathematics of computing~Probabilistic inference problems</concept_desc>
	<concept_significance>500</concept_significance>
	</concept>
	</ccs2012>
\end{CCSXML}

\ccsdesc[500]{Mathematics of computing~Probabilistic inference problems}

%%
%% Keywords. The author(s) should pick words that accurately describe
%% the work being presented. Separate the keywords with commas.
\keywords{power-law, friendship paradox, degree distribution, networks, maximum likelihood estimation, sampling bias}

%%
%% This command processes the author and affiliation and title
%% information and builds the first part of the formatted document.
\maketitle

\section{\label{sec:introduction}Introduction}

\begin{figure*}[!tbh]
	\includegraphics[width=1.0\columnwidth, trim={0.4cm 0.2cm 0.15cm 0.0cm},clip]{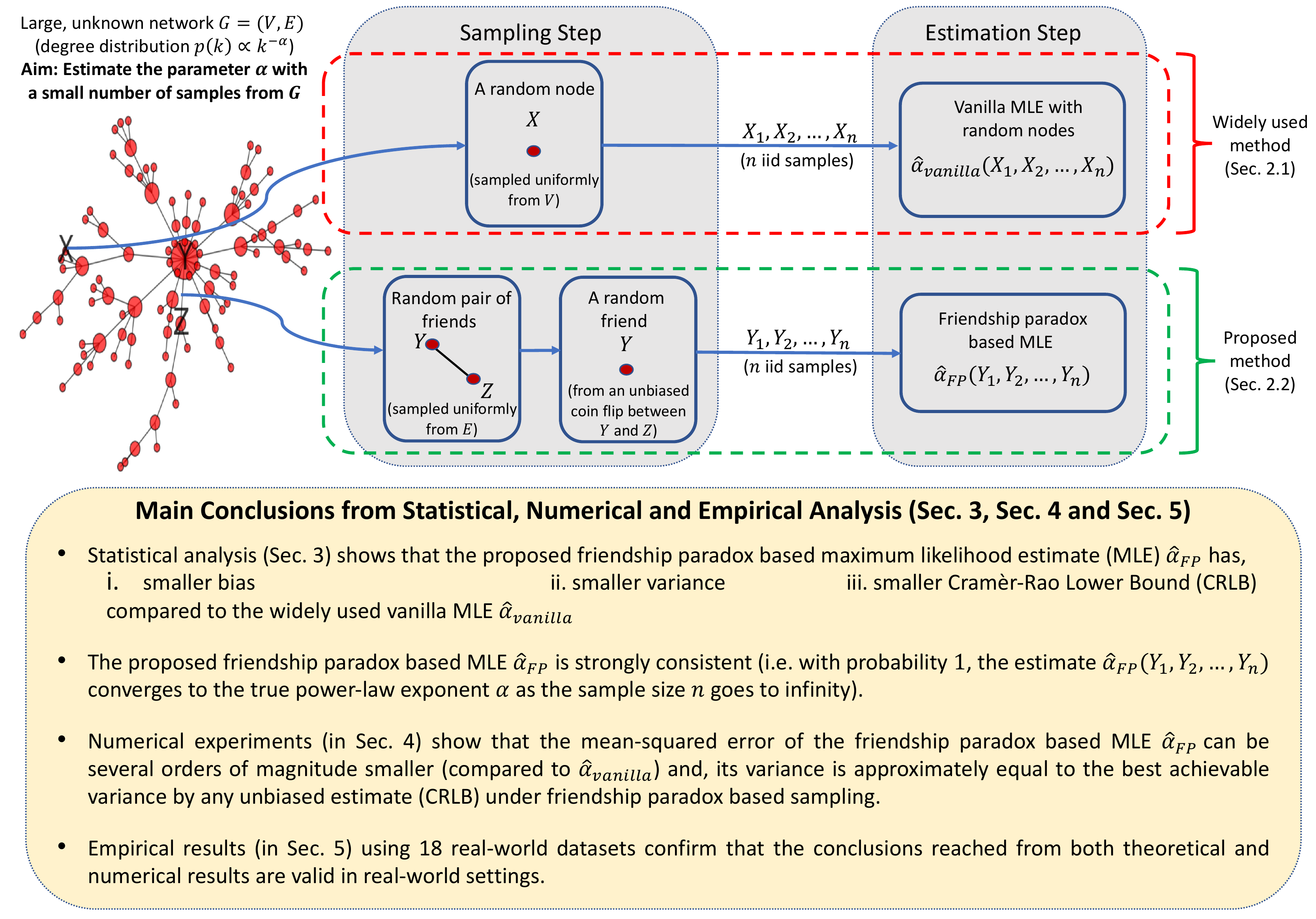}
	\caption{Summary of the key idea of this paper (maximum likelihood estimation with friendship paradox based sampling for estimating power-law degree distributions), how it compares with the widely used existing method, the conclusions drawn from its statistical and numerical analysis and how these ideas are organized in different sections of the paper. The widely used vanilla method samples nodes uniformly and therefore, fails to efficiently capture the characteristic heavy tail of the power-law degree distribution due to the lack of samples from that regime. In comparison, the proposed method exploits the parametric form of the degree distribution and the friendship paradox based sampling~(detailed in Sec.~\ref{subsec:fp}) to obtain nodes with larger degrees with higher probabilities (indicated by node sizes in the figure) which then leads to a sample containing more high degree nodes that efficiently captures the characteristic heavy tail of the power-law degree distribution. Our results are backed by statistical analysis and detailed numerical simulation experiments and also generalize to parametric forms other than power-law.  }
	\label{fig:blockdiag}
\end{figure*}

Many real-world networks such as social networks~\cite{aiello2000random, ebel2002scale, leskovec2007graph}, internet~\cite{faloutsos1999power}, world-wide web~\cite{huberman1999internet} power-grids~\cite{chassin2005evaluating}, and biological networks~\cite{jeong2001lethality, eguiluz2005scale} have power-law degree distributions~i.e.~the probability~$p(k)$ that a uniformly sampled node has $k$ neighbors is proportional to $k^{-\alpha}$ for a fixed value of the power-law exponent $\alpha > 0$. A key reason for this ubiquity is that power-law distributions arise naturally from simple and intuitive generative processes such as preferential attachment~\cite{barabasi1999emergence,  mitzenmacher2004brief, newman2005power, gabaix2009power}. However, despite being ubiquitous in nature, the full structure of such networks is often not precisely known due to practical constraints such as large size, restrictions on accessing network data, etc.~\cite{murai2013set}.
%Hence, estimating the power-law exponent~$\alpha$ from small~(compared to size of the network) and noisy samples of data is a key step in the study of networks from a data science perspective. 
Hence, estimating the power-law exponent~$\alpha$ from small~(compared to size of the network) and noisy samples of data is a key step in the study of networks and related topics such as spreading processes on networks~\cite{pastor2001epidemic, pastor2015epidemic, gomez2012inferring} and network stability~\cite{cohen2000resilience}. 
This problem is defined formally as follows:

\begin{problemdefinition}[Estimating the exponent of a power-law degree distribution]
	Consider an undirected network ${G = (V,E)}$ with a power-law degree distribution
	\begin{align}
	p(k)\propto k^{-\alpha},  \quad k \geq k_{min} \label{eq:pk_proportional_k}
	\end{align} 
	where the power-law exponent ${\alpha>0}$ is unknown. Assume ${k_{min}>0}$ is the known minimum degree. Estimate the power-law exponent $\alpha$ using the degrees of~$\samplesize$~nodes independently sampled from the network~$G$.
\end{problemdefinition}

%The most widely used solutions for the above problem~(detailed in Sec.~\ref{subsec:rel_work}) are based on uniform sampling of nodes from the network~$G$ and, tend to produce \mbox{inaccurate~}(compared to the method we propose) results as a consequence of the large variance introduced by the heavy tail of the power-law degree distribution. As a better alternative, we propose an estimator which is based on a non-uniform sampling method motivated by the concept of the \textit{friendship paradox}. This friendship paradox based sampling method can be thought of as a more accurate measurement sensor~(compared to the vanilla uniform sampling) that reduces the effect of measurement noise introduced by the heavy tail of the power-law degree distribution. More specifically, this paper presents:

\noindent
{\bf Main results and context: }The most widely used solutions for the above problem~(detailed in Sec.~\ref{subsec:rel_work}) are parametric methods based on uniform sampling of nodes from the network~$G$ and, tend to produce \mbox{inaccurate~}(compared to the method we propose) results as a consequence of the large variance introduced by the heavy tail of the power-law degree distribution. As an alternative, \cite{eom2015tail} proposed a non-parametric estimation method called \textit{tail-scope}~(detailed in Sec.~\ref{subsec:rel_work}) based on a non-uniform sampling method motivated by the concept of the \textit{friendship paradox}. 

Since \cite{eom2015tail} is the closest idea to ours, let us briefly compare the results in \cite{eom2015tail} to our results.  This friendship paradox based sampling method can be thought of as a more accurate measurement sensor~(compared to the vanilla uniform sampling) that reduces the effect of measurement noise introduced by the heavy tail of the power-law degree distribution. However, the non-parametric method proposed in \cite{eom2015tail} does not exploit the explicit power-law form given in Eq.~\eqref{eq:pk_proportional_k} and also does not provide analytical guarantees for its accuracy~(e.g.~bias, variance and mean-squared error for a given sample size compared to the widely used methods). Further, it is also not clear how the method proposed in \cite{eom2015tail} generalizes to degree distributions that take forms other than power-law (e.g.~exponential degree distributions). As a solution we propose an estimator which exploits both the parametric form of power-law given in Eq.~\eqref{eq:pk_proportional_k} and the friendship paradox based sampling method. The proposed method is provably superior to the state of the art in both finite and asymptotic sample regimes and also possesses desirable statistical properties such as asymptotic unbiasedness, consistency and asymptotic normality. Further, the proposed method easily generalizes to other parametric forms of degree distributions that do not necessarily have heavy tails (e.g.~exponential degree distributions) while preserving all its desirable statistical properties. More specifically, this paper presents:
\begin{compactenum}
	\item A maximum likelihood estimation method that exploits the concept of \textit{friendship paradox} for sampling nodes according to a non-uniform distribution.
	
	\item Expressions for bias, variance and Cram\`{e}r-Rao lower bound of the proposed estimation method and their comparisons with alternative methods that prove how the proposed method outperforms the alternative methods.
	
	\item Numerical experiments that illustrate the better performance~(in terms of bias, variance mean-squared error) of the proposed method compared to alternative methods under various different conditions.
	
	\item Empirical results that verify the conclusions reached from the statistical analysis and numerical experiments using $18$ real-world network datasets.
	
	\item Generalization of the main results to exponential degree distributions to illustrate how all desirable statistical properties of the proposed method are valid even in non heavy-tailed degree distributions. 
\end{compactenum}
An overview of the key ideas of this paper along with their organization in different sections is shown in Fig.~\ref{fig:blockdiag} and is also outlined below in detail.

\vspace{0.25cm}
\noindent
{\bf Organization: } Sec.~\ref{subsec:fp} reviews the friendship paradox and Sec.~\ref{subsec:rel_work} discusses related literature. Sec.~\ref{sec:MLEs} presents our main idea of maximum likelihood estimation that exploits the friendship paradox based sampling. Sec.~\ref{sec:statistical_analysis} presents  the statistical analysis of the proposed estimate and compares it with widely used alternative methods. Sec.~\ref{sec:numerical_results} presents detailed numerical experiments that support and complement the statistical analysis. Sec.~\ref{sec:empirical_results} presents empirical results that compare the proposed method with alternative methods using several real-world datasets. Finally, Appendix~\ref{append:MLE_exponential} shows how the main results extend to other types of degree distributions and, Appendix~\ref{append:reproducibility} provides additional details on the numerical and empirical results. The code is publicly available on the Github repository~\cite{anonymousGitHubRepo2020KDD} to ensure that all results in the paper are completely reproducible.

%All numerical results of this paper are completely reproducible with the code publicly available in the Github repository~\cite{anonymousGitHubRepo2020KDD} and, additional simulation details are provided in Appendix~\ref{append:reproducibility}. 

\subsection{Friendship Paradox}
\label{subsec:fp}
The ``friendship paradox" is a form of observation bias first presented in~\cite{feld1991} by Scott L. Feld in 1991. The friendship paradox states, ``\textit{on average, the number of friends of a random friend is always greater than the number of friends of a random individual}". Formally:
\begin{theorem} { (Friendship Paradox \cite{feld1991})}
	\label{th:friendship_paradox_Feld}
	Consider an undirected graph~${G = (V,E)}$. Let $X$~be a node sampled uniformly from~$V$ and, $Y$~be a uniformly sampled end-node from a uniformly sampled edge~${e\in E}$. Then,
	\begin{equation}
	\mathbb{E} \{d(Y)\} \geq \mathbb{E}\{d(X)\},
	\end{equation} where, $d(X)$ and $d(Y)$ denote the degrees of $X$ and $Y$, respectively. 
\end{theorem}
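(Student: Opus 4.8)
The plan is to compute both expectations explicitly as functions of the degree sequence and then reduce the claimed inequality to the nonnegativity of a variance. Write $n = |V|$ and $m = |E|$. Since $X$ is sampled uniformly from $V$, the first expectation is immediate:
\begin{equation}
\mathbb{E}\{d(X)\} = \frac{1}{n}\sum_{v \in V} d(v) = \frac{2m}{n},
\end{equation}
where the last equality is the handshaking lemma $\sum_{v \in V} d(v) = 2m$. The real content is to identify the law of $Y$ and show that $\mathbb{E}\{d(Y)\}$ is a \emph{ratio of moments} of the degree, which will exceed $\mathbb{E}\{d(X)\}$ by a variance term.

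The crux of the argument, and the step I expect to require the most care, is determining the sampling distribution of $Y$. Each node $v$ is incident to exactly $d(v)$ edges; under the two-stage scheme (pick an edge uniformly with probability $1/m$, then pick one of its two endpoints with probability $1/2$), node $v$ is selected via each of its incident edges, giving
\begin{equation}
\mathbb{P}(Y = v) = \sum_{e \ni v} \frac{1}{m}\cdot\frac{1}{2} = \frac{d(v)}{2m}.
\end{equation}
Thus $Y$ follows the degree-biased (size-biased) distribution rather than the uniform one. The bookkeeping subtlety here is to verify this is a genuine probability distribution, which again reduces to the handshaking lemma since $\sum_{v}\mathbb{P}(Y=v) = \frac{1}{2m}\sum_v d(v) = 1$, and to handle the case $d(v)=0$ (isolated nodes, which are never selected) without issue.

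With the law of $Y$ in hand, I would compute
\begin{equation}
\mathbb{E}\{d(Y)\} = \sum_{v \in V} d(v)\,\mathbb{P}(Y = v) = \frac{\sum_{v\in V} d(v)^2}{2m} = \frac{\mathbb{E}\{d(X)^2\}}{\mathbb{E}\{d(X)\}}.
\end{equation}
The proof then closes by the variance decomposition $\mathbb{E}\{d(X)^2\} = \var(d(X)) + \big(\mathbb{E}\{d(X)\}\big)^2$, which yields
\begin{equation}
\mathbb{E}\{d(Y)\} = \mathbb{E}\{d(X)\} + \frac{\var(d(X))}{\mathbb{E}\{d(X)\}} \geq \mathbb{E}\{d(X)\},
\end{equation}
since $\var(d(X)) \geq 0$ and $\mathbb{E}\{d(X)\} > 0$ whenever $E \neq \emptyset$. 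Equivalently, the same inequality follows from Cauchy--Schwarz applied to $(d(v))_{v\in V}$ against the all-ones vector, namely $\big(\sum_v d(v)\big)^2 \leq n \sum_v d(v)^2$. Either route makes transparent that equality holds precisely when the graph is regular (all degrees equal), so the paradox is strict exactly when the degree distribution has positive variance.
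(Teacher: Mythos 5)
Your proof is correct and follows essentially the same route the paper relies on: the paper does not reprove Feld's theorem but cites it and sketches exactly your key step, namely that $Y$ is degree-biased ($\mathbb{P}(Y=v)=d(v)/2m$, i.e.\ $\neighbordegdist(k)\propto k\degdist(k)$ as in Eq.~\eqref{eq:qk_proportional_kpk}), from which the inequality is the standard ratio-of-moments/variance argument. Your identification of the equality case (regular graphs) and the handshaking-lemma bookkeeping are both sound.
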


In Theorem~\ref{th:friendship_paradox_Feld}, the random variable~$Y$ depicts a \textit{random friend}~(or a random neighbor) since it is obtained by sampling a pair of friends~(i.e.~an edge from the graph) uniformly and then choosing one of them via an unbiased coin flip. Equivalently, a random friend~$Y$ is a node sampled from~$V$ with a probability proportional to its degree. Further, the degree distribution~$\neighbordegdist$ of a random friend~$Y$ is given by
\begin{equation}
\label{eq:qk_proportional_kpk}
\neighbordegdist(k) \propto k\degdist(k),
\end{equation}
where, $\degdist$ is the degree distribution defined in Eq.~\eqref{eq:pk_proportional_k}. Eq.~(\ref{eq:qk_proportional_kpk}) follows from the fact that each node with a degree~$k$ appears as a friend of~$k$~other individuals. Note from~Eq.~\eqref{eq:qk_proportional_kpk} that the degree distribution~$\neighbordegdist$ of random friend~$Y$ is a right-skewed version of the degree distribution $\degdist$ and thus resulting in the friendship paradox in Theorem~\ref{th:friendship_paradox_Feld}.

The intuition behind Theorem \ref{th:friendship_paradox_Feld} is as follows. Individuals with large numbers of friends~(high degree) appear as the friends of a large number of individuals. Hence, such popular individuals can contribute to an increase in the average number of friends of friends. On the other hand, individuals with small numbers of friends~(low degree) appear as friends of a smaller number of individuals. Hence, they cannot cause a significant change in the average number of friends of friends. This asymmetric contribution of high and low degree individuals to the average number of friends of friends causes the friendship paradox. Further, \cite{cao2016} shows that the original version of the friendship paradox~(Theorem~\ref{th:friendship_paradox_Feld}) is a consequence of the monotone likelihood ratio stochastic ordering between random variables $d(Y)$ and $d(X)$.

\subsection{Motivation and Related Work \label{subsec:rel_work}}

%\paragraph{Alternative methods and benchmarks: }
Widely used methods for estimating the power-law exponent~$\alpha$ include:
\begin{compactenum}
	\item{\it Linear regression:} Using the empirical degree distribution~$\empiricaldegdist$ on double-logarithmic axes~(i.e.~$\ln \empiricaldegdist(k) $ with $\ln k$ where $\empiricaldegdist$ is the empirical distribution of degrees of $\samplesize$ uniformly sampled nodes), fit a straight line~(using least squares) whose slope is the estimate of~$\alpha$. The idea of this method is that $\ln \degdist (k)$ varies linearly with $\ln k$ with a slope of $-\alpha$ according to~Eq.~\eqref{eq:pk_proportional_k}.
	
	\item{\it Maximum likelihood estimation with uniformly sampled nodes:} Sample a set of nodes $X_1, X_2, \dots X_\samplesize$ independently and uniformly and, compute~$\vanillaMLE$ which maximizes the likelihood of observing the degree sequence~$d(X_1), d(X_2), \dots, d(X_n)$.
\end{compactenum}
Previous works~\cite{goldstein2004problems, bauke2007parameter} show that linear regression~(as well as its variants) for estimating the power-law exponent~$\alpha$ yields inaccurate results~(compared to maximum likelihood estimation method) due to two main reasons. First, the lack of data points from the tail of the power-law degree distribution~$\degdist$~(to construct the empirical degree distribution~$\empiricaldegdist$) systematically underestimates the power-law exponent~$\alpha$. Second, the log-log transformation of the empirical degree distribution violates several assumptions~(such as constant variance across all data points and zero-mean Gaussian noise) that are required to make the least squares estimate unbiased and statistically efficient~{(see~\cite{clauset2009power} for a detailed survey of these systematic inaccuracies in using linear regression method to estimate~$\alpha$)}. Therefore, the linear regression method for estimating power-laws is not well-justified from a statistical viewpoint. In comparison, maximum likelihood estimation with uniformly sampled nodes is a more principled approach which has been shown to achieve a better accuracy in estimating the power-law exponent~$\alpha$~\cite{clauset2009power, virkar2014power}. Maximum likelihood estimates also possess several appealing statistical properties including consistency, asymptotic unbiasedness and asymptotic efficiency. Hence, maximum likelihood estimation with uniformly sampled nodes is currently regarded as the state of the art method for estimating power-law degree distributions, and we use it as the main benchmark for evaluating the performance of the method proposed in this paper. 

%\vspace{0.1cm}
\paragraph{Use of friendship paradox in estimation problems: } The friendship paradox has been used in several applications related to networks under the broad theme \textit{``how network biases can be exploited effectively in estimation problems?"}. For example, \cite{
	garcia2014using, 
	christakis2010} show how the friendship paradox can be used for quickly detecting a disease outbreak, \cite{nettasinghe2018your} proposes polling algorithms that exploits the friendship paradox for efficiently estimating the fraction of individuals with a certain attribute~(e.g.~intending to vote for a certain political party) in a social network. Apart from these, \cite{alipourfard2019friendship, jackson2019friendship, nettasinghe2019diffusion, 
	chin2018evaluating, 
	%krishnamurthy2019information, 
	lee2019impact, 
	higham2018centrality, 
	bagrow2017friends, 
	lerman2016,
	momeni2016qualities, 
	lattanzi2015, eom2014, kooti2014network, hodas2013} 
also explore further generalizations (to directed graphs, attributes other than degree, etc.) and applications~(influence maximization etc.) of friendship paradox. 

More closely related to our work, \cite{eom2015tail} presents a method named \textit{tail-scope} which utilizes the friendship paradox for non-parametric estimation of heavy-tailed degree distributions. Tail-scope first obtains an empirical estimate~$\empiricalneighbordegdist$ of the neighbor degree distribution $\neighbordegdist$~(defined in Eq.~\eqref{eq:qk_proportional_kpk}) by sampling random neighbors~(denoted by random variable~$Y$ in Theorem~\ref{th:friendship_paradox_Feld}). Then, following Eq.~\eqref{eq:qk_proportional_kpk}, the empirical neighbor degree distribution~$\empiricalneighbordegdist(k)$ scaled by $k$ is used as the estimate of the degree distribution $\degdist(k)$~i.e.
\begin{equation}
\tailscope(k) \propto \frac{\empiricalneighbordegdist(k)}{k}.
\end{equation} The rationale behind tail-scope is that the empirical neighbor degree distribution~$\empiricalneighbordegdist$ will include more high degree nodes~(due to the friendship paradox) and hence, the scaled estimate~$\tailscope$ will capture the tail of the degree distribution better compared to the empirical degree distribution~$\empiricaldegdist$~(which is obtained with uniformly sampled nodes). While the method we propose is motivated in part by tail-scope, there are several key differences between tail-scope method and our method. Firstly, the method proposed in this paper is a parametric method that makes use of the specific power-law form in Eq.~\eqref{eq:pk_proportional_k} whereas tail-scope is a non-parametric method for general heavy-tailed degree distributions. Secondly, the method we propose has a provably better accuracy compared to the state of the art method~(maximum likelihood estimation with uniformly sampled nodes) and it also possesses desirable statistical properties including strong consistency, asymptotic unbiasedness and asymptotic statistical efficiency whereas such analytical guarantees are not available for tail-scope method. Finally, as we show in Appendix~\ref{append:MLE_exponential}, the proposed parametric method easily generalizes to other parametric forms of degree distributions~(e.g.~exponential distribution) as well, while preserving the performance guarantees (compared to the vanilla method based on uniformly sampled nodes) that it offers. Further, there have been several other works (e.g.~\cite{ribeiro2010estimating, ribeiro2012estimation}) which also suggest that edge sampling based methods (such as random walk methods) can offer better accuracy (compared to uniform sampling) in non-parametric estimation tasks by obtaining more samples from the tail of the degree distribution.

%\vspace{-0.0cm}
%\begin{remark} \normalfont
%	The problem of constructing a statistical decision test as to whether the underlying network has a power-law degree distribution is not considered in this paper. Instead, assuming that the true degree distribution is a power-law, this paper proposes a statistically efficient method to estimate the power-law exponent. 
%\end{remark}\

%\vspace{0.1cm}
%\noindent
%\textbf{Summary of Motivation and Related Work: } 
In summary, it has been shown in the literature~\cite{goldstein2004problems,bauke2007parameter,clauset2009power} that maximum likelihood methods are more suitable for estimating power-law degree distributions of the form in Eq.~\eqref{eq:pk_proportional_k} compared to alternative methods. Further, exploiting the friendship paradox~(Theorem~\ref{th:friendship_paradox_Feld}) has shown to be effective in empirical estimation of heavy-tailed degree distributions by including more high degree nodes into the sample~\cite{eom2015tail}. Motivated by these findings, this paper combines maximum likelihood estimation with friendship paradox based sampling in a principled manner to obtain an asymptotically unbiased, strongly consistent and statistically efficient estimate that provably outperforms the state of the art.

\section{Maximum likelihood estimation of power-law exponent}
\label{sec:MLEs}

This section introduces two different network sampling methods (uniform sampling and friendship paradox based sampling); we then compare the maximum likelihood estimate of the power-law exponent~$\alpha$ for these two sampling methods. The statistical analysis~(presented in Sec.~\ref{sec:statistical_analysis}) of the maximum likelihood estimates (MLEs) for the two sampling methods illustrates how the MLE obtained with friendship paradox based sampling~(proposed method)
%~-~henceforth referred to as \textit{friendship paradox based MLE}) 
outperforms the MLE obtained with uniform sampling (classically used method).
%~-~henceforth referred to as \textit{vanilla MLE}). 

%In abstract terms, we are comparing the statistical properties  of two different MLEs for two different models; where each model generates its own sampled data.

We first state the key assumptions used in deriving and analyzing the MLEs. 
\begin{assumption}
	\label{assumption:continuous_distribution}
	The power-law distribution $\degdist$ is continuous in $k$ and is of the form,
	\begin{align}
	\label{eq:power_law_distribution}
	p(k) =  \frac{\alpha - 1}{\mindegree}\bigg( \frac{k}{\mindegree}	\bigg)^{-\alpha}, \quad k\geq \mindegree
	\end{align}
	where, $\mindegree$ is the minimum degree.
\end{assumption}

\begin{assumption}
	\label{assumption:alpha_range}
	The power-law exponent~$\alpha$ is greater than~$2$ i.e.~${\alpha > 2}$.
\end{assumption}

Assumption~\ref{assumption:continuous_distribution} allows us to derive closed-form expressions for MLEs for the power-law exponent~$\alpha$. A similar assumption~(on continuity) has been used in \cite{eom2015tail} that deals with estimating heavy-tailed degree distributions. Further, Assumption~\ref{assumption:continuous_distribution} is naturally applicable for weighted networks where the weighted degree~(also called node strength) follows a continuous power-law distribution~\cite{barrat2004architecture, antoniou2008statistical}. For discrete power-law distributions, MLEs are not available in closed-form~\cite{clauset2009power}. Thus, Assumption~\ref{assumption:continuous_distribution} is useful in our theoretical analysis. In Sec.~\ref{subsec:discrete_MLEs}, we also briefly discuss how the MLEs for discrete power-law distributions can be approximated using closed-form expressions.

%In Sec.~\ref{sec:numerical_results}, we numerically illustrate the impact of this continuity assumption by comparing our results with an estimator that takes the discrete nature of the degree distribution into account.
%Hence, one has to use numerical methods which limits the statistical analysis.

All moments~$m\geq \alpha - 1$ diverge for the power-law distribution given in~Eq.~\eqref{eq:power_law_distribution}. Hence, both mean and variance diverge when $\alpha \leq 2$ and variance diverges (and mean is finite) when $2<\alpha \leq 3$. Therefore, Assumption~\ref{assumption:alpha_range} ensures that the degree distribution~$\degdist$ in Eq.~\eqref{eq:power_law_distribution} has a finite mean which is necessary for the derivation of the proposed friendship paradox based MLE in Sec.~\ref{subsec:FP_MLE}.

\subsection{Vanilla MLE with uniform sampling}
\label{subsec:vanilla_MLE}
In the classical vanilla maximum likelihood estimation method~\cite{goldstein2004problems, clauset2009power}, $n$ number of nodes $X_1, X_2, \dots, X_{\samplesize}$ are independently and uniformly sampled from the network. Then, the likelihood of observing the degree sequence $d(X_1), d(X_2), \dots, d(X_{\samplesize})$ is,
\begin{align}
\mathbb{P}\{d(X_1), \dots, d(X_{\samplesize})\vert \alpha \} = \prod_{i = 1}^{\samplesize} \frac{\alpha - 1}{\mindegree}\bigg( \frac{d(X_i)}{\mindegree}	\bigg)^{-\alpha} \nonumber
\end{align}
following~Eq.~\eqref{eq:power_law_distribution}. Therefore, the log-likelihood for the vanilla method is,
\begin{align}
\begin{split}
\label{eq:loglikelihood_X}
\loglikelihoodX(\alpha) &= \ln \mathbb{P}\{d(X_1), d(X_2), \dots, d(X_{\samplesize})\vert \alpha \} \\ 
& =\samplesize\ln (\alpha - 1) - \samplesize\ln (\mindegree) - \alpha \sum_{i = 1}^{\samplesize} \ln \bigg(\frac{d(X_i)}{\mindegree}\bigg). 
\end{split}
\end{align}
Then, by solving $\frac{\partial \loglikelihoodX}{\partial \alpha} = 0$, we get the vanilla MLE of the power-law exponent $\alpha$ as,
\begin{align}
\label{eq:vanilla_MLE}
\vanillaMLE = \frac{\samplesize}{\sum_{i = 1}^{\samplesize}\ln\Big(\frac{d(X_i)}{\mindegree}\Big)} + 1.
\end{align}

Next, we present a maximum likelihood estimator that exploits the friendship paradox.
\subsection{MLE with friendship paradox based sampling}
\label{subsec:FP_MLE}
\paragraph{Neighbor Degree Distribution: }{Recall from Sec.~\ref{subsec:fp} that~$Y$ denotes a random neighbor~i.e.~uniformly sampled end-node of a uniformly sampled edge. The neighbor degree distribution~$q(k)$ is proportional to~$k\degdist(k)$ as stated in Eq.~\eqref{eq:qk_proportional_kpk}. Hence, the normalizing constant~$C_q$ of the neighbor degree distribution $q$ can be derived as,
	\begin{align}
	C_q &=  \int_{\mindegree}^{\infty}k\degdist(k)dk %\nonumber\\
	= \int_{\mindegree}^{\infty}k \frac{\alpha - 1}{\mindegree}\bigg( \frac{k}{\mindegree}	\bigg)^{-\alpha}dk %\nonumber\\
	= \frac{\alpha - 1}{\alpha - 2}\mindegree
	\end{align}
	where, $p(k)$ is the power-law degree distribution defined in Eq.~\eqref{eq:power_law_distribution} that is continuous in degree~$k$ by Assumption~\ref{assumption:continuous_distribution}. Note that the normalizing constant of the distribution $\neighbordegdist$ is equal to the first moment~(i.e.~mean) of the power-law degree distribution~$\degdist$~(defined in Eq.~\eqref{eq:power_law_distribution}) and, is guaranteed to exist by Assumption~\ref{assumption:alpha_range}. Then, it follows that,
	\begin{align}
	\neighbordegdist(k) = \frac{1}{C_q}k\degdist(k) %\nonumber\\
	= \frac{\alpha - 2}{\mindegree}\bigg( \frac{k}{\mindegree}	\bigg)^{-(\alpha -1)}, \quad k\geq \mindegree. \label{eq:eq:power_law_neighbor_degree_distribution}
	\end{align}
}

%\vspace{0.1cm}
\paragraph{Friendship Paradox based MLE: }
The friendship paradox based maximum likelihood estimator begins with sampling $\samplesize$ number of random neighbors $Y_1, Y_2,\dots, Y_{\samplesize}$ from the network independently. Then, the likelihood of observing the neighbor degree sequence $d(Y_1), d(Y_2), \dots, d(Y_{\samplesize})$ can be written using the neighbor degree distribution in Eq.~\eqref{eq:eq:power_law_neighbor_degree_distribution} as, 
\begin{equation}
\hspace{-0.0cm}\mathbb{P}\{d(Y_1), \dots, d(Y_{\samplesize})\vert \alpha \} = \prod_{i = 1}^{\samplesize} \frac{\alpha - 2}{\mindegree}\bigg( \frac{d(Y_i)}{\mindegree}	\bigg)^{-(\alpha - 1)}. \nonumber
\end{equation}
Therefore, the log-likelihood for the friendship paradox based sampling method is,
\begin{align}
\begin{split}
\label{eq:loglikelihood_Y}
\loglikelihoodY(\alpha) &= \ln \mathbb{P}\{d(Y_1), d(Y_2), \dots, d(Y_{\samplesize})\vert \alpha \} \\ 
& =\samplesize\ln (\alpha - 2) - \samplesize\ln (\mindegree) - (\alpha-1) \sum_{i = 1}^{\samplesize} \ln \bigg(\frac{d(Y_i)}{\mindegree}\bigg).
\end{split}
\end{align}
Then, by solving $\frac{\partial \loglikelihoodY}{\partial \alpha} = 0$, we get the friendship paradox based MLE of the power-law exponent~$\alpha$ as,
\begin{align}
\label{eq:FP_MLE}
\fpMLE = \frac{\samplesize}{\sum_{i = 1}^{\samplesize}\ln\Big(\frac{d(Y_i)}{\mindegree}\Big)} + 2.
\end{align}

\begin{remark}[Power-law form of the neighbor degree distribution] \normalfont
	\label{remark:powerlaw_form_of_q}
Note that the neighbor degree distribution $q(k)$ (given in Eq.~\eqref{eq:eq:power_law_neighbor_degree_distribution}) is also a power-law degree distribution. More specifically, the neighbor degree distribution $q(k)$ can be obtained from the power-law degree distribution $p(k)$ (given in Eq.~\eqref{eq:power_law_distribution}) by simply replacing the power-law exponent $\alpha$ with $\alpha - 1$. Thus, the neighbor degree-distribution $q(k)$ is also a power-law but, with a heavier tail (i.e.~a smaller power-law exponent) compared to the degree distribution $p(k)$. This is due to the fact that $q(k) \propto kp(k)$ where the scaling term $k$ reduces the power-law exponent of $p(k)$ by a value of $1$.
\end{remark}

\begin{remark}[Alternative implementations] \normalfont
	Recall~(from Sec.~\ref{subsec:fp}) 
	that a random neighbor~$Y$ is a uniformly sampled end node of a uniformly sampled edge~${e \in E}$. In applications such as online social networks, uniform link sampling~(and therefore, sampling random neighbors) is possible since each edge has a unique integer ID assigned from a specific range of integers~\cite{leskovec2006}. In applications where sampling uniform edges is not possible~(e.g.~unknown network, lack of edge IDs), one possible method to sample random neighbors~(to implement the friendship paradox based MLE in Eq.~\eqref{eq:FP_MLE}) is by using random walks. Assuming the underlying network $G = (V,E)$ is a connected, non-bipartite graph, the stationary distribution of a random walk on~$G$ samples each node $v \in V$ with a probability proportional to the degree~$d(v)$ of node~$v$~(page 298,~\cite{durrett2010_probability})~i.e.~the stationary distribution of a random walk on a connected, non-bipartite graph samples random neighbors. Hence, a node sampled from a sufficiently long random walk has approximately the same distribution as a random neighbor~$Y$. Another possibility is to use a second version of the friendship paradox which states ``uniformly sampled friend of a uniformly sampled node has more friends than a uniformly sampled node, on average''~(see \cite{cao2016, chin2018evaluating} for more details on this second version). This second version does not require sampling links and, is equivalent to a one step random walk. 
\end{remark}

\subsection{Approximating the maximum likelihood estimates for discrete data}
\label{subsec:discrete_MLEs}

In Sec.~\ref{subsec:vanilla_MLE} and Sec.~\ref{subsec:FP_MLE}, we derived the MLEs for the power-law exponent~$\alpha$ assuming that the observations are from a continuous power-law distribution~(Assumption~\ref{assumption:continuous_distribution}). However, network degree distributions are discrete distributions for which a closed-form MLE for the power-law exponent is not available~(as discussed at the beginning of Sec.~\ref{sec:MLEs}). Therefore, the maximum likelihood estimates of the power-law exponent for such discrete distributions have to be obtained by numerical methods. Two solutions are available to overcome the lack of a closed-form expression for the MLE of the power-law exponent of a discrete power-law degree distribution.

The first solution is to simply apply the expressions derived under the Assumption~\ref{assumption:continuous_distribution} (in Sec.~\ref{subsec:vanilla_MLE} and Sec.~\ref{subsec:FP_MLE}) to the discrete case~i.e.~pretending that the discrete data is from a continuous distribution for the estimation purpose. The second solution is to use an approximate closed-form solution for the power-law exponent of a discrete power-law distribution as proposed in \cite{clauset2009power}. The approximate MLE of the power-law exponent $\alpha$ for the discrete case under the vanilla uniform sampling is:
\begin{equation}
\label{eq:vanilla_discrete_MLE}
\vanillaDiscreteMLE = \frac{\samplesize}{\sum_{i = 1}^{\samplesize}\ln\Big(\frac{d(X_i)}{\mindegree - 0.5}\Big)} + 1
\end{equation}
where, $X_1, X_2, \dots, X_{\samplesize}$ are independently and uniformly sampled nodes~\cite{clauset2009power}. Since the neighbor degree distribution $\neighbordegdist$ of a network with a power-law degree distribution is also a power-law (as stated in Remark~\ref{remark:powerlaw_form_of_q}), the approximate MLE of the power-law exponent $\alpha$ for the discrete case under the friendship paradox based sampling is:
\begin{equation}
\label{eq:FP_discrete_MLE}
\fpDiscreteMLE = \frac{\samplesize}{\sum_{i = 1}^{\samplesize}\ln\Big(\frac{d(Y_i)}{\mindegree - 0.5}\Big)} + 2
\end{equation}
where, $Y_1, Y_2,\dots, Y_{\samplesize}$ are  independently sampled random neighbors. Henceforth, we refer to~$\vanillaDiscreteMLE$~(in Eq.~\eqref{eq:vanilla_discrete_MLE}) and~$\fpDiscreteMLE$~(in Eq.~\eqref{eq:FP_discrete_MLE}) as vanilla discrete MLE and friendship paradox based discrete MLE, respectively.

In Sec.~\ref{sec:numerical_results} and Sec.~\ref{sec:empirical_results}, we will explore the performance vanilla discrete MLE~$\vanillaDiscreteMLE$ and friendship paradox based discrete MLE~$\fpDiscreteMLE$ in terms of bias, variance and mean-squared error using synthetic and real-world network datasets. Further, we will also compare their performance with the MLEs in Eq.~\eqref{eq:vanilla_MLE} and Eq.~\eqref{eq:FP_MLE} that were derived assuming a continuous power-law distribution.

\section{Statistical analysis of the Estimates}
\label{sec:statistical_analysis}
This section presents the statistical analysis of the two estimates: vanilla MLE~$\vanillaMLE$ in Eq.~\eqref{eq:vanilla_MLE}~(which uses uniform sampling of nodes) and the friendship paradox based MLE~$\fpMLE$ in Eq.~\eqref{eq:FP_MLE}~(which samples nodes non-uniformly according to the friendship paradox). The statistical analysis below shows that the proposed friendship paradox based MLE~$\fpMLE$ outperforms the widely used vanilla MLE~$\vanillaMLE$ in terms of bias, variance and Cram\`{e}r-Rao Lower Bound.

\vspace{0.1cm}
\paragraph{Comparison of bias and variance of MLEs for finite sample size: } The following result from~\cite{muniruzzaman1957measures}~(also discussed in \cite{clauset2009power} in a broader context) characterizes the bias and variance of the vanilla MLE~$\vanillaMLE$ under finite sample sizes~$\samplesize < \infty$.

\vspace{-0.1cm}
\begin{theorem}[Bias and variance of the vanilla MLE~\cite{muniruzzaman1957measures, clauset2009power}]
	\label{th:bias_variance_vanillaMLE}
	The bias and variance of the vanilla MLE~$\vanillaMLE$ in Eq.~\eqref{eq:vanilla_MLE} for sample size~$\samplesize$ are given by,
	\begin{align}
	\begin{split}
	\hspace{-0.0cm}\bias\{\vanillaMLE\} &= \frac{\alpha - 1}{\samplesize -1 },\;\; \textbf{ \normalfont for } \samplesize > 1\\  \var \{\vanillaMLE\} &= \frac{\samplesize^2(\alpha - 1)^2}{(\samplesize-1)^2(\samplesize-2)} ,\;\; \textbf{ \normalfont for } \samplesize > 2.
	\end{split}
	\label{eq:bias_variance_vanillaMLE}
	\end{align}
\end{theorem}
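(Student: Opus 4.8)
The plan is to reduce the entire statement to the moments of a Gamma random variable by first identifying the law of the log-degrees. First I would define, for each uniformly sampled node $X_i$, the transformed variable $T_i = \ln\big(d(X_i)/\mindegree\big)$ and show by a change of variables that $T_i$ is exponentially distributed with rate $\alpha-1$. Writing $d(X_i) = \mindegree e^{T_i}$ and substituting into the power-law density of Eq.~\eqref{eq:power_law_distribution}, the Jacobian factor $\mindegree e^{t}$ cancels against $(\mindegree e^{t})^{-\alpha}$, leaving the density $(\alpha-1)e^{-(\alpha-1)t}$ supported on $t\geq 0$. Since the $X_i$ are independent and identically distributed, so are the $T_i$, each $\mathrm{Exp}(\alpha-1)$.

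Next I would observe that the denominator appearing in $\vanillaMLE$ is exactly $S := \sum_{i=1}^{\samplesize} T_i$, a sum of $\samplesize$ independent $\mathrm{Exp}(\alpha-1)$ variables, so that $S \sim \mathrm{Gamma}(\samplesize,\alpha-1)$. Writing $\vanillaMLE - 1 = \samplesize / S$, the whole problem collapses to computing the first two moments of the reciprocal of a Gamma variable, namely $\mathbb{E}[S^{-1}]$ and $\mathbb{E}[S^{-2}]$.

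The core computation is then the two inverse-Gamma integrals. Using the $\mathrm{Gamma}(\samplesize,\alpha-1)$ density and the substitution that turns each integrand back into an unnormalized Gamma density, these evaluate to $\mathbb{E}[S^{-1}] = (\alpha-1)/(\samplesize-1)$ for $\samplesize > 1$ and $\mathbb{E}[S^{-2}] = (\alpha-1)^2/\big((\samplesize-1)(\samplesize-2)\big)$ for $\samplesize > 2$. The constraints $\samplesize > 1$ and $\samplesize > 2$ are precisely what make the integrals $\int_0^\infty s^{\samplesize - 2} e^{-(\alpha-1)s}\,ds$ and $\int_0^\infty s^{\samplesize - 3} e^{-(\alpha-1)s}\,ds$ converge at the origin, which is how the stated sample-size ranges arise. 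Substituting into $\mathbb{E}[\samplesize/S]$ and subtracting $\alpha$ yields $\bias\{\vanillaMLE\} = (\alpha-1)/(\samplesize-1)$, while $\var\{\samplesize/S\} = \samplesize^2\big(\mathbb{E}[S^{-2}] - \mathbb{E}[S^{-1}]^2\big)$ simplifies, after combining the two fractions over the common factor $(\samplesize-1)$, to $\samplesize^2(\alpha-1)^2/\big((\samplesize-1)^2(\samplesize-2)\big)$.

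I do not expect a deep obstacle: the one genuine insight is the change of variables recognizing the log-degrees as exponential (after which the Gamma structure makes everything mechanical), and the only places demanding care are tracking the convergence conditions on $\samplesize$ that produce the stated domains, and carrying out the variance simplification so that exactly one factor of $(\samplesize-2)$ survives in the denominator rather than $(\samplesize-1)(\samplesize-2)$.
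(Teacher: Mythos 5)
Your proposal is correct and follows essentially the same route the paper takes: the paper cites this result to Muniruzzaman and Clauset et al.\ rather than proving it in-text, but its own proof of the analogous friendship-paradox version (Theorem~\ref{th:bias_variance_fpMLE}, Appendix~\ref{append:bias_variance_fpMLE}) is exactly your argument with $\alpha-1$ replaced by $\alpha-2$ --- log-degrees are i.i.d.\ exponential, their sum is Gamma, and the estimator's moments are read off from the inverse-Gamma mean and variance. Your identification of where the conditions $\samplesize>1$ and $\samplesize>2$ come from and the variance simplification both check out.
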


The bias and variance of the proposed friendship paradox based MLE~$\fpMLE$ under a finite sample size $\samplesize < \infty$ are characterized  in the following result,  allowing it to be compared with the vanilla MLE~$\vanillaMLE$. 

\begin{theorem}[Bias and variance of the friendship paradox based MLE]
	\label{th:bias_variance_fpMLE}
	The bias and variance of the friendship paradox based MLE~$\fpMLE$ in Eq.~\eqref{eq:FP_MLE} for sample size~$\samplesize$ are given by,
	\begin{align}
	\begin{split}
	\bias\{\fpMLE\} &= \frac{\alpha - 2}{\samplesize -1},\;\; \textbf{ \normalfont for } \samplesize > 1\\
	\var \{\fpMLE\} &= \frac{\samplesize^2(\alpha - 2)^2}{(\samplesize-1)^2(\samplesize-2)},\;\; \textbf{ \normalfont for } \samplesize > 2.
	\label{eq:bias_variance_FPmle}
	\end{split}
	\end{align}
\end{theorem}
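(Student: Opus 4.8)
The plan is to exploit Remark~\ref{remark:powerlaw_form_of_q}: the neighbor degree distribution $\neighbordegdist$ is itself a power-law, but with exponent $\alpha-1$ instead of $\alpha$. This lets me reduce the entire computation to the already-proved vanilla result (Theorem~\ref{th:bias_variance_vanillaMLE}) rather than re-deriving any moments.

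First I would make the reduction explicit. Define the auxiliary quantity
\[
\tilde\alpha \;=\; \frac{\samplesize}{\sum_{i=1}^{\samplesize}\ln\!\big(d(Y_i)/\mindegree\big)} + 1,
\]
which is the vanilla MLE expression of Eq.~\eqref{eq:vanilla_MLE} evaluated on the random-neighbor degrees $d(Y_1),\dots,d(Y_\samplesize)$. Since the $Y_i$ are i.i.d.\ draws from the power-law $\neighbordegdist$ of Eq.~\eqref{eq:eq:power_law_neighbor_degree_distribution}, whose exponent is $\alpha-1$, the quantity $\tilde\alpha$ is exactly a vanilla MLE for a power-law of exponent $\alpha-1$. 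Comparing Eq.~\eqref{eq:FP_MLE} with the display above gives the identity $\fpMLE = \tilde\alpha + 1$, which is the crux of the argument.

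Next I would apply Theorem~\ref{th:bias_variance_vanillaMLE} with its exponent set to $\alpha-1$. Assumption~\ref{assumption:alpha_range} gives $\alpha>2$, so $\alpha-1>1$, which is exactly what is needed for $\neighbordegdist$ to be normalizable and for the moment formulas behind Theorem~\ref{th:bias_variance_vanillaMLE} to hold. That theorem then gives $\mathbb{E}\{\tilde\alpha\} = (\alpha-1) + \frac{\alpha-2}{\samplesize-1}$ and $\var\{\tilde\alpha\} = \frac{\samplesize^2(\alpha-2)^2}{(\samplesize-1)^2(\samplesize-2)}$. Because $\fpMLE = \tilde\alpha+1$ only adds a deterministic constant, the variance is unchanged while the mean shifts by $1$, yielding $\bias\{\fpMLE\} = \mathbb{E}\{\tilde\alpha\} + 1 - \alpha = \frac{\alpha-2}{\samplesize-1}$ and $\var\{\fpMLE\} = \frac{\samplesize^2(\alpha-2)^2}{(\samplesize-1)^2(\samplesize-2)}$, matching the claim.

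The algebra here is routine; the only real subtlety is justifying that Theorem~\ref{th:bias_variance_vanillaMLE} may be instantiated at the shifted exponent $\alpha-1$ rather than only at $\alpha$. This is immediate once one notes that its proof depends on the exponent solely through the distribution of $\ln(d(\cdot)/\mindegree)$. As a self-contained alternative that sidesteps the reduction, I could instead compute directly: the change of variables $k=\mindegree e^{w}$ in Eq.~\eqref{eq:eq:power_law_neighbor_degree_distribution} shows that $W_i := \ln(d(Y_i)/\mindegree)$ is $\mathrm{Exponential}(\alpha-2)$, so $\sum_{i=1}^{\samplesize} W_i \sim \mathrm{Gamma}(\samplesize,\alpha-2)$. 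The inverse-gamma moments $\mathbb{E}\{\samplesize/\sum_i W_i\}$ (for $\samplesize>1$) and $\mathbb{E}\{(\samplesize/\sum_i W_i)^2\}$ (for $\samplesize>2$) then feed into $\fpMLE = \samplesize/\sum_i W_i + 2$ to reproduce the same bias and variance.
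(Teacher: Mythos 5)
Your proposal is correct and matches the paper's own argument: the reduction via Remark~\ref{remark:powerlaw_form_of_q} (instantiating Theorem~\ref{th:bias_variance_vanillaMLE} at exponent $\alpha-1$ and shifting by the deterministic constant) is exactly the sketch the paper gives in the main text, and your self-contained alternative --- showing $\ln(d(Y_i)/\mindegree)$ is $\mathrm{Exponential}(\alpha-2)$ so the reciprocal of the sum is $\text{Inv-Gamma}(\samplesize,\alpha-2)$ --- is precisely the formal proof in Appendix~\ref{append:bias_variance_fpMLE}. Your remark that the instantiation at $\alpha-1>1$ is legitimate because the vanilla result only needs the log-degrees to be exponentially distributed is a worthwhile clarification the paper leaves implicit.
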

\begin{proof}
Recall from Remark~\ref{remark:powerlaw_form_of_q} that the neighbor degree distribution $q(k)$ is also a power-law distribution with power-law exponent $\alpha - 1$. Thus, Theorem~\ref{th:bias_variance_fpMLE} can be obtained by replacing $\alpha$ in  Theorem~\ref{th:bias_variance_vanillaMLE} with $\alpha - 1$. The formal full proof is provided in Appendix~{\ref{append:bias_variance_fpMLE}}.
\end{proof}

%Theorem~\ref{th:bias_variance_vanillaMLE} and Theorem~\ref{th:bias_variance_fpMLE} show that the bias and variance of both estimates go to zero as sample size $\samplesize$ tends to infinity. However,
The following immediate consequence of Theorem~\ref{th:bias_variance_vanillaMLE} and Theorem~\ref{th:bias_variance_fpMLE} shows that the proposed friendship paradox based MLE~$\fpMLE$ outperforms vanilla MLE~$\vanillaMLE$ for any finite sample size~$\samplesize$.

\begin{figure*}[]
	\centering
	\begin{subfigure}[!h]{0.47\textwidth}
		\centering
		\includegraphics[width= \textwidth, trim={0.2cm 0.25cm 0.25cm 0.2cm}, clip]{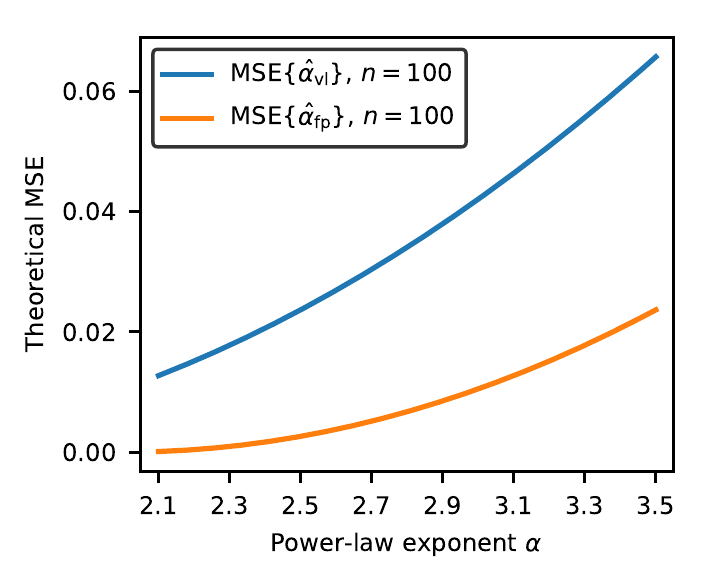}
		\caption{}
		\label{subfig:theoretical_MSEs}
	\end{subfigure}\hfill
	\begin{subfigure}[!h]{0.47\textwidth}
		\centering
		\includegraphics[width= \textwidth, trim={0.2cm 0.25cm 0.25cm 0.2cm}, clip]{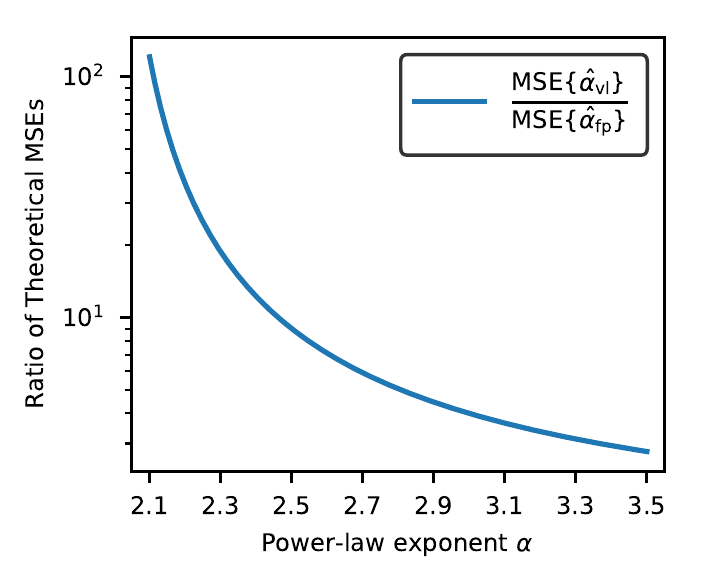}
		\caption{}
		\label{subfig:theoretical_MSE_ratios}
	\end{subfigure}\hfill 
	\caption{Theoretical mean-squared error (MSE) values of the vanilla estimate $\vanillaMLE$ and the friendship paradox based estimate $\fpMLE$ (Fig.~\ref{subfig:theoretical_MSEs}) and their ratios (Fig.~\ref{subfig:theoretical_MSE_ratios}) calculated analytically using Eq.~\eqref{eq:bias_variance_vanillaMLE} and Eq.~\eqref{eq:bias_variance_FPmle} for sample size $\samplesize = 100$. The plots show that, theoretically, the proposed friendship paradox based estimate $\fpMLE$ has a MSE which is smaller than the widely used vanilla MLE $\vanillaMLE$. The ratio between the MSEs increases as $\alpha$ approaches $2$.}
	\label{fig:theoretical_MSEs_and_ratios}
\end{figure*}

\begin{corollary}
	\label{cor:bias_variance_ordering}
	The bias and variance of the vanilla MLE~$\vanillaMLE$ defined in Eq.~\eqref{eq:vanilla_MLE} and the friendship paradox based MLE~$\fpMLE$ defined in Eq.~\eqref{eq:FP_MLE} satisfy,
	\begin{align}
	\begin{split}
	\bias\{\fpMLE\} &< \bias\{\vanillaMLE\},\;\;\textbf{ \normalfont for } \samplesize > 1\\
	\var\{\fpMLE\} &< \var\{\vanillaMLE\},\;\; \textbf{ \normalfont for } \samplesize > 2.
	\end{split}
	\end{align}
\end{corollary}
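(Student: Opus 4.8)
The plan is to read off the closed-form expressions supplied by Theorem~\ref{th:bias_variance_vanillaMLE} and Theorem~\ref{th:bias_variance_fpMLE} and compare them directly, exploiting the fact that the two pairs of formulas are structurally identical and differ only through the substitution $\alpha - 1 \mapsto \alpha - 2$. The single ingredient that drives both inequalities is Assumption~\ref{assumption:alpha_range}, which guarantees $\alpha > 2$ and hence the chain $0 < \alpha - 2 < \alpha - 1$. Everything else is elementary arithmetic on positive quantities.

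For the bias, I would note that $\bias\{\fpMLE\} = \frac{\alpha - 2}{\samplesize - 1}$ and $\bias\{\vanillaMLE\} = \frac{\alpha - 1}{\samplesize - 1}$ share the same denominator $\samplesize - 1$, which is strictly positive exactly when $\samplesize > 1$. Dividing the strict inequality $\alpha - 2 < \alpha - 1$ by this positive quantity immediately yields $\bias\{\fpMLE\} < \bias\{\vanillaMLE\}$ on the stated range $\samplesize > 1$.

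For the variance, the two expressions again differ only in the factor $(\alpha - 2)^2$ versus $(\alpha - 1)^2$, the remaining factor $\frac{\samplesize^2}{(\samplesize - 1)^2(\samplesize - 2)}$ being common and strictly positive precisely when $\samplesize > 2$. The one step that genuinely invokes Assumption~\ref{assumption:alpha_range} is the passage from $\alpha - 2 < \alpha - 1$ to $(\alpha - 2)^2 < (\alpha - 1)^2$: squaring preserves the order only because both quantities are nonnegative, which $\alpha > 2$ ensures. Multiplying $(\alpha - 2)^2 < (\alpha - 1)^2$ by the common positive factor then delivers $\var\{\fpMLE\} < \var\{\vanillaMLE\}$ for $\samplesize > 2$.

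I expect no real obstacle, since the corollary is a direct arithmetic consequence of the two preceding theorems. The only point I would flag explicitly is that the variance comparison rests on the positivity of $\alpha - 2$, so that squaring is monotone; were $\alpha$ permitted to drop below $2$, the sign of $\alpha - 2$ could flip and the variance ordering could in principle reverse. The bias comparison, in contrast, requires only $\samplesize - 1 > 0$ and makes no use of the lower bound on $\alpha$ beyond ensuring the estimators are well defined.
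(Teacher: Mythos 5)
Your proof is correct and matches the paper's treatment: the paper presents the corollary as an immediate consequence of Theorems~\ref{th:bias_variance_vanillaMLE} and~\ref{th:bias_variance_fpMLE}, obtained by exactly the comparison you carry out, with Assumption~\ref{assumption:alpha_range} ($\alpha > 2$) supplying $0 < \alpha - 2 < \alpha - 1$ so that the squaring step in the variance comparison is order-preserving. Your explicit flag that the variance ordering hinges on the positivity of $\alpha - 2$ is a worthwhile observation, but the argument itself is the same as the paper's.
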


From Corollary~\ref{cor:bias_variance_ordering}, it immediately follows that the mean-squared error (MSE) of the proposed friendship paradox based MLE~$\fpMLE$  is also smaller compared to the MSE of the vanilla MLE~$\vanillaMLE$. Fig.~\ref{fig:theoretical_MSEs_and_ratios} illustrates the MSE values of the two estimates calculated using Eq.~\eqref{eq:bias_variance_vanillaMLE} and Eq.~\eqref{eq:bias_variance_FPmle} and the ratio between them for $\samplesize = 100$. It can be seen that the the ratio between the MSEs of the two methods grows rapidly to several orders of magnitude as the power-law exponent~$\alpha$ becomes smaller since both bias and variance of the friendship paradox based MLE~$\fpMLE$ approaches 0 as $\alpha$ goes to $2$.

Having established that the proposed friendship paradox based MLE~$\fpMLE$ outperforms the vanilla MLE~$\fpMLE$  for all sample sizes ${\samplesize < \infty}$, we now turn to the case where the sample size~$\samplesize$ tends to infinity. 

\vspace{0.1cm}
\paragraph{Comparison of the asymptotic properties of the MLEs: }
Based on Eq.~\eqref{eq:bias_variance_vanillaMLE},~Eq.~\eqref{eq:bias_variance_FPmle} and the strong law of large numbers, we have the following result:
\begin{theorem}[Asymptotic unbiasedness and strong consistency of MLEs]
	\label{th:strong_consistency}
	The vanilla MLE~$\vanillaMLE$~(defined in Eq.~\eqref{eq:vanilla_MLE}) and the friendship paradox based MLE~$\fpMLE$~(defined in Eq.~\eqref{eq:FP_MLE}) are asymptotically unbiased and strongly consistent~i.e.~converge to the true power-law exponent~$\alpha$ with probability $1$ as the sample size $\samplesize$ tends to infinity. 
\end{theorem}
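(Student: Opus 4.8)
The plan is to establish the two assertions---asymptotic unbiasedness and strong consistency---separately, and to obtain the results for the friendship paradox based MLE $\fpMLE$ by reducing it to the vanilla case via Remark~\ref{remark:powerlaw_form_of_q}. Asymptotic unbiasedness is immediate from the finite-sample bias expressions already in hand: Theorem~\ref{th:bias_variance_vanillaMLE} gives $\bias\{\vanillaMLE\} = \frac{\alpha-1}{\samplesize-1}$ and Theorem~\ref{th:bias_variance_fpMLE} gives $\bias\{\fpMLE\} = \frac{\alpha-2}{\samplesize-1}$, and both tend to $0$ as $\samplesize \to \infty$, so each estimator is asymptotically unbiased with nothing further to verify.

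For strong consistency I would apply the strong law of large numbers to the log-degree ratios. Writing $W_i = \ln(d(X_i)/\mindegree)$, the $W_i$ are i.i.d., and the crux is to evaluate $\mathbb{E}[W_i]$. The change of variable $w = \ln(k/\mindegree)$ applied to the density in Eq.~\eqref{eq:power_law_distribution} should show that $W_i$ is exponentially distributed with rate $\alpha-1$, hence $\mathbb{E}[W_i] = \frac{1}{\alpha-1}$, which is finite because $\alpha > 2$ by Assumption~\ref{assumption:alpha_range}. The strong law then yields $\frac{1}{\samplesize}\sum_{i=1}^{\samplesize} W_i \to \frac{1}{\alpha-1}$ almost surely. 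Since $\vanillaMLE = \big(\frac{1}{\samplesize}\sum_{i=1}^{\samplesize} W_i\big)^{-1} + 1$ and the map $x \mapsto 1/x$ is continuous at the limit point $\frac{1}{\alpha-1} \neq 0$, the continuous mapping theorem gives $\vanillaMLE \to (\alpha-1)+1 = \alpha$ almost surely.

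For $\fpMLE$, I would invoke Remark~\ref{remark:powerlaw_form_of_q}, which states that the neighbor degrees $d(Y_i)$ are i.i.d.\ from a power-law with exponent $\alpha-1$. The identical argument then applies: $\ln(d(Y_i)/\mindegree)$ is exponential with rate $\alpha-2$ and mean $\frac{1}{\alpha-2}$, finite precisely because $\alpha > 2$, so $\fpMLE \to (\alpha-2)+2 = \alpha$ almost surely. The only step demanding genuine care---and thus the main obstacle, modest as it is---is confirming that the limiting mean of the log-degrees is strictly positive and finite, since this is what guarantees both that the strong law is applicable and that the reciprocal map is continuous at the limit; Assumption~\ref{assumption:alpha_range} supplies exactly this, with $\alpha > 2$ being the sharp threshold at which $\mathbb{E}[\ln(d(Y)/\mindegree)] = \frac{1}{\alpha-2}$ remains finite.
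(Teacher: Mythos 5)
Your proposal is correct and follows essentially the same route as the paper's proof in Appendix~\ref{append:strong_consistency}: asymptotic unbiasedness read off from the finite-sample bias formulas, and strong consistency via the observation that the log-degree ratios are i.i.d.\ exponential (rate $\alpha-1$ for $X_i$, rate $\alpha-2$ for $Y_i$), followed by the strong law of large numbers and continuity of the reciprocal map at the nonzero limit. The only cosmetic difference is that the paper cites \cite{muniruzzaman1957measures} for the vanilla case and writes out the argument only for $\fpMLE$, whereas you spell out both explicitly.
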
 
\begin{proof}
See Appendix~\ref{append:strong_consistency}.
\end{proof}

In order to analytically compare the two MLEs in the asymptotic regime, we use the  Cram\`{e}r-Rao Lower Bound (CRLB).  For a scalar random variable~(which is the case we deal with), CRLB simplifies to the reciprocal of the Fisher Information~\cite{casella2002statistical}. %For the reader unfamiliar with statistical estimation theory, we 
We stress that the CRLB is important because  the variance of any unbiased estimate~(or asymptotically unbiased estimate) is bounded below by the CRLB~i.e.~CRLB is the smallest variance achievable by any unbiased estimate. The following result characterizes the two Cram\`{e}r-Rao Lower Bounds,~$\CRLBvanilla$ and $\CRLBfp$, of the two estimates~$\vanillaMLE$ and~$\fpMLE$.

\begin{theorem}
	\label{th:CRLB}
	The  Cram\`{e}r-Rao Lower Bounds of the vanilla MLE $\vanillaMLE$ in Eq.~\eqref{eq:vanilla_MLE} and the friendship paradox based MLE~$\fpMLE$ in Eq.~\eqref{eq:FP_MLE} are given by,
	\begin{align}
	\hspace{-0.25cm}\CRLBvanilla(\alpha) = \frac{(\alpha - 1)^2}{\samplesize}, \quad	\CRLBfp(\alpha) = \frac{(\alpha - 2)^2}{\samplesize}.
	\end{align}
\end{theorem}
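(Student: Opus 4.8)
The plan is to exploit the remark made just before the statement: for a scalar parameter the Cram\`{e}r--Rao lower bound equals the reciprocal of the Fisher information, so I only need to compute the two Fisher informations and invert them. First I would compute the Fisher information of the vanilla model. Differentiating the per-observation log-density $\ln \degdist(k) = \ln(\alpha-1) - \ln \mindegree - \alpha \ln(k/\mindegree)$ once with respect to $\alpha$ gives $\tfrac{1}{\alpha-1} - \ln(k/\mindegree)$, and a second differentiation gives $-\tfrac{1}{(\alpha-1)^2}$. The crucial observation is that this second derivative is deterministic --- it does not depend on the observed degree --- so taking the negated expectation is trivial and the per-sample Fisher information is exactly $1/(\alpha-1)^2$. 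Because the $n$ samples $d(X_1),\dots,d(X_\samplesize)$ are i.i.d., the Fisher information of the full sample is $\samplesize/(\alpha-1)^2$; equivalently I could differentiate $\loglikelihoodX$ in Eq.~\eqref{eq:loglikelihood_X} directly and obtain $\partial^2 \loglikelihoodX/\partial\alpha^2 = -\samplesize/(\alpha-1)^2$. Inverting yields $\CRLBvanilla(\alpha) = (\alpha-1)^2/\samplesize$.

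For the friendship-paradox estimate I would proceed identically on the neighbor degree density $\neighbordegdist(k)$ in Eq.~\eqref{eq:eq:power_law_neighbor_degree_distribution}. Differentiating $\ln \neighbordegdist(k) = \ln(\alpha-2) - \ln\mindegree - (\alpha-1)\ln(k/\mindegree)$ twice with respect to $\alpha$ again gives a data-independent quantity, $-1/(\alpha-2)^2$, so the full-sample Fisher information is $\samplesize/(\alpha-2)^2$ and $\CRLBfp(\alpha) = (\alpha-2)^2/\samplesize$. Alternatively, invoking Remark~\ref{remark:powerlaw_form_of_q}, $\neighbordegdist$ is itself a power-law with exponent $\alpha-1$; since the reparametrization $\alpha \mapsto \alpha-1$ has unit Jacobian, the Fisher information is invariant under it, and substituting $\alpha-1$ for $\alpha$ in the vanilla bound immediately gives $((\alpha-1)-1)^2/\samplesize = (\alpha-2)^2/\samplesize$.

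The computation itself is elementary; the only point that needs care is justifying the regularity conditions under which the CRLB holds and under which the Fisher information may be written as the negative expected second derivative of the log-likelihood (i.e.\ interchanging differentiation and integration, and checking that the support $[\mindegree,\infty)$ does not depend on $\alpha$). These conditions are satisfied here because the power-law family in Eq.~\eqref{eq:power_law_distribution} is a regular exponential family in $\alpha$ with fixed support, with $\alpha>2$ ensuring integrability by Assumption~\ref{assumption:alpha_range}. I expect this verification --- rather than the derivative calculation --- to be the only mild obstacle, and it reduces to the standard check that the log-density is twice differentiable in $\alpha$ with integrable derivatives.
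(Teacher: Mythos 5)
Your proposal is correct and follows essentially the same route as the paper's proof in Appendix~\ref{append:CRLB}: differentiate the log-likelihood twice in $\alpha$, observe that the second derivative is data-independent ($-\samplesize/(\alpha-1)^2$ and $-\samplesize/(\alpha-2)^2$ respectively), and invert the negated expectation; the paper merely runs the exponent-shift argument in the opposite direction, obtaining $\CRLBvanilla$ from $\CRLBfp$ by replacing $\alpha$ with $\alpha+1$. Your added remarks on the regularity conditions (fixed support, integrability under Assumption~\ref{assumption:alpha_range}) are a sound supplement that the paper leaves implicit.
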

\begin{proof}
See Appendix~\ref{append:CRLB}. 
\end{proof}

Since maximum likelihood estimates are asymptotically normal and efficient~(achieves the CRLB), it follows that, 
\begin{align}
\begin{split}
\label{eq:asymptotically_normal_distributions}
\sqrt{\samplesize}(\vanillaMLE - \alpha) &\xrightarrow{d} \mathcal{N}(0, (\alpha - 1)^2) \\ \sqrt{\samplesize}(\fpMLE - \alpha) &\xrightarrow{d} \mathcal{N}(0, (\alpha - 2)^2).
\end{split}
\end{align}
Hence, it can be seen that the asymptotic variance of~$	\sqrt{\samplesize}(\fpMLE - \alpha)$ is smaller than that of~$	\sqrt{\samplesize}(\vanillaMLE - \alpha)$, implying that the proposed friendship paradox based MLE~$\fpMLE$ outperforms the vanilla MLE~$\vanillaMLE$ for large samples sizes.  

\begin{remark}[Exponential degree distributions] \normalfont Apart from the power-law degree distribution, several real world networks (e.g.~Worldwide Marine Transportation Network~\cite{wei2009worldwide}) have exponential degree distributions~i.e.~probability~$\degdist_{exp}(k)$ that a uniformly sampled node has~$k$~neighbors is proportional to~$e^{-\lambda k}$ where $\lambda>0$ is the fixed rate parameter~\cite{deng2011exponential}. The results in Sec.~\ref{sec:MLEs} and Sec.~\ref{sec:statistical_analysis} also extend to the case where the underlying network has an exponential degree distribution as shown in Appendix~\ref{append:MLE_exponential}. This result is important in that it shows that the proposed friendship paradox based maximum likelihood estimation method outperforms the vanilla method (based on uniform sampling) in the setting of light-tailed degree distributions as well. 
\end{remark}

\vspace{0.25cm}
\noindent
{\bf Summary of Statistical Analysis:} The statistical analysis motivates the use of the proposed friendship paradox based MLE~$\fpMLE$~(defined in Eq.~\eqref{eq:FP_MLE}) in place of the widely used vanilla MLE~$\vanillaMLE$~(defined in Eq.~\eqref{eq:vanilla_MLE}) for estimating power-law degree distributions. Theorems~\ref{th:bias_variance_vanillaMLE},~\ref{th:bias_variance_fpMLE} characterize the bias and variance of the two estimates $\vanillaMLE, \fpMLE$ and, Corollary~\ref{cor:bias_variance_ordering} concludes that the proposed method has a smaller bias and a smaller variance under finite sample size $\samplesize<\infty$. Then, Theorem~\ref{th:CRLB} gives the Cram\`{e}r-Rao bounds for the two estimates $\vanillaMLE, \fpMLE$ to show that the proposed MLE $\fpMLE$ outperforms the vanilla MLE $\vanillaMLE$ in the asymptotic regime and, possesses properties such as asymptotic unbiasedness, consistency and asymptotic normality. Thus, the statistical analysis concludes that the proposed friendship paradox based MLE~$\fpMLE$ outperforms the widely used vanilla MLE~$\vanillaMLE$ in both finite and asymptotic regimes.

\section{Numerical Comparison of the Performance of the Estimates}
\label{sec:numerical_results}

This section presents detailed numerical examples that complement the statistical analysis in Sec.~\ref{sec:statistical_analysis}. More specifically, we: 
\begin{compactenum}[i.]
	\item compare the four estimates of the power-law exponent~$\alpha$~(presented in Sec.~\ref{sec:MLEs}): vanilla maximum likelihood estimate~(MLE)~${\vanillaMLE}$~(widely used method) given in Eq.~\eqref{eq:vanilla_MLE},  friendship paradox based MLE~$\fpMLE$~(proposed method) given in Eq.~\eqref{eq:FP_MLE} and their discrete versions~$\vanillaDiscreteMLE, \fpDiscreteMLE$ given in Eq.~\eqref{eq:vanilla_discrete_MLE} and Eq.~\eqref{eq:FP_discrete_MLE} respectively, in terms of empirically estimated bias, variance, MSE and the theoretical CRLB values
	
	\item  numerically evaluate the effect of the sample size $\samplesize$ on the performance of the friendship paradox based methods~($\fpMLE, \fpDiscreteMLE$) in comparison to the widely used vanilla methods based on uniform sampling~($\vanillaMLE, \vanillaDiscreteMLE$).
\end{compactenum}

\vspace{0.25cm}
\noindent
{\bf Simulation Setup: } We use the configuration model~\cite{newman2003structure} to generate networks with $50000$ nodes that have degree sequences sampled from a power-law distribution with a given power-law exponent~$\alpha$ and a minimum degree~$\mindegree$.  Then, the bias, variance and MSE of the estimates (for power-law exponent~$\alpha$ in the interval~$[2.1,\, 3.5]$ and minimum degree~$\mindegree = 1,\,5,\,50,\,500$) are empirically estimated using a Monte-Carlo simulation over  {$5000$} independent iterations. Additional details on the simulation setup are provided in the Appendix~\ref{append:reproducibility} and all the simulation codes are publicly available in the Github repository~\cite{anonymousGitHubRepo2020KDD}.

%\vspace{0.1cm}
\paragraph{Numerical Comparison of bias, variance and mean-squared error: }
\begin{figure}
	\includegraphics[width=\columnwidth,  trim={0.3cm 0.3cm 0.3cm 0.3cm},clip]{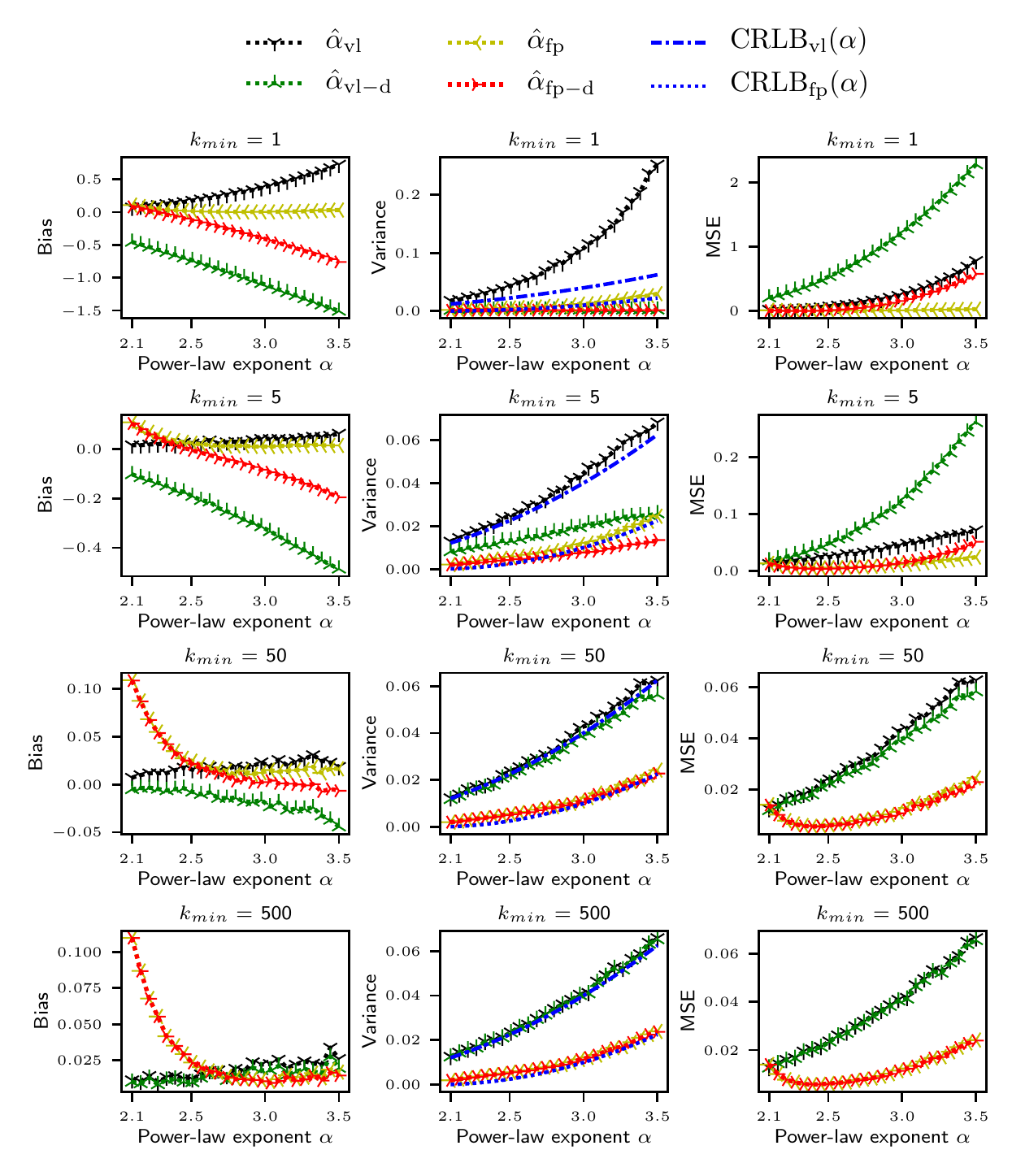}
	\caption{ 
		Empirical bias, variance, mean-squared error~(MSE) and theoretical Cram\`{e}r-Rao lower bound~(CRLB) values of the four estimates~(presented in Sec.~\ref{sec:MLEs}): vanilla maximum likelihood estimate~(MLE)~${\vanillaMLE}$ given in Eq.~\eqref{eq:vanilla_MLE},  friendship paradox based MLE~$\fpMLE$ given in Eq.~\eqref{eq:FP_MLE} and their discrete versions~$\vanillaDiscreteMLE, \fpDiscreteMLE$ given in Eq.~\eqref{eq:vanilla_discrete_MLE} and Eq.~\eqref{eq:FP_discrete_MLE} respectively, for sample size ${\samplesize = 100}$. 
%			The synthetic data was generated using the configuration model~(discussed in Sec.~\ref{sec:numerical_results}). 
The plots show that the proposed friendship paradox based MLEs $\fpMLE, \fpDiscreteMLE$ outperform the vanilla MLEs~$\vanillaMLE, \vanillaDiscreteMLE$ in terms of MSE for all considered values of power-law exponent~$\alpha$ and minimum-degree $\mindegree$. Further, the variance of the friendship paradox based MLEs are approximately equal to the best achievable variance (CRLB) under the friendship paradox based sampling and they also induce a smaller bias compared to the vanilla methods. Thus, the figure suggests that the friendship paradox based MLEs yield better results compared to the widely used vanilla methods based on uniform sampling for estimating power-law degree distributions.
	}
	\label{fig:Var_MSE_CRLB}
\end{figure}

Fig.~\ref{fig:Var_MSE_CRLB} shows the empirically estimated bias, variance, MSE and the CRLB values of the four estimates of the power-law exponent~$\alpha$ presented in Sec.~\ref{sec:MLEs}: vanilla MLE~${\vanillaMLE}$~(widely used method) given in Eq.~\eqref{eq:vanilla_MLE},  friendship paradox based MLE~$\fpMLE$~(proposed method) given in Eq.~\eqref{eq:FP_MLE} and their discrete versions~$\vanillaDiscreteMLE, \fpDiscreteMLE$ given in Eq.~\eqref{eq:vanilla_discrete_MLE} and Eq.~\eqref{eq:FP_discrete_MLE} respectively, for sample size~$\samplesize = 100$. Several important observations that complement the statistical analysis in Sec.~\ref{sec:statistical_analysis} can be drawn from Fig.~\ref{fig:Var_MSE_CRLB}. 

First, the MSE~(shown in the 3\textsuperscript{rd} column of Fig.~\ref{fig:Var_MSE_CRLB}) of the friendship paradox based MLE~$\fpMLE$~(indicated in yellow) and its discrete version~$\fpDiscreteMLE$~(indicated in red) are smaller compared to the widely used methods that are based on uniform sampling~(i.e.~vanilla MLE~$\vanillaMLE$ and its discrete version~$\vanillaDiscreteMLE$). Further, the friendship paradox based MLE~$\fpMLE$~(indicated in yellow) has a slightly smaller MSE compared to its discrete version~$\fpDiscreteMLE$~(indicated in red), especially for smaller minimum degree values~(i.e.~$\mindegree \leq 5$) and larger power-law exponent values~(i.e.~$\alpha >3$). Thus, the numerical results suggest that the friendship paradox based MLEs~$\fpMLE, \fpDiscreteMLE$ produce smaller overall errors~(i.e.~MSE values) compared to the currently used methods that are based on uniform sampling~$\vanillaMLE, \vanillaDiscreteMLE$ and, verify the conclusions reached by the statistical analysis in Sec.~\ref{sec:statistical_analysis}.

Second, note that the empirical variance~(shown in the 2\textsuperscript{nd} column of Fig.~\ref{fig:Var_MSE_CRLB}) of the proposed friendship paradox based MLE~$\fpMLE$~(indicated in yellow) and its discrete version~$\fpDiscreteMLE$~(indicated in red) are smaller compared to the vanilla MLEs based on uniform sampling~(i.e.~vanilla MLE~$\vanillaMLE$ and its discrete version~$\vanillaDiscreteMLE$). Importantly, it can be observed that the variance of the proposed friendship paradox based MLE~$\fpMLE$ is approximately equal to the theoretical CRLB value~($\CRLBfp$~-~blue dotted line) for all considered values of minimum degree~$\mindegree$ and power-law exponent~$\alpha$. This observation indicates that the variance of the proposed friendship paradox based MLE~$\fpMLE$ is almost equal to the best achievable variance under the friendship paradox based sampling (since CRLB is the smallest achievable variance by any unbiased estimate of a parameter for a given sampling method as discussed in~Sec.~\ref{sec:statistical_analysis}). In comparison, variance of the vanilla MLE~$\vanillaMLE$ achieves the CRLB~($\CRLBvanilla$~-~blued dashed line) only for larger minimum degree values~($\mindegree > 5$) and, is still larger compared to the friendship paradox based MLE~$\fpMLE$. In both vanilla and friendship paradox based cases, the discrete MLEs~$\vanillaDiscreteMLE, \fpDiscreteMLE$ reduce the variance for smaller minimum degree values~($\mindegree < 5$) while still maintaining the trend that friendship paradox based discrete MLE has a smaller variance compared to the vanilla discrete MLE~(i.e.~$\var\{\fpDiscreteMLE\} < \var\{\vanillaDiscreteMLE\} $). Thus, the numerical comparison of the estimates based on variance also suggests that the friendship paradox based MLE~$\fpMLE$ and its discrete version~$\fpDiscreteMLE$ are better compared to the currently used methods that are based on uniform sampling~$\vanillaMLE, \vanillaDiscreteMLE$ . This numerical observation also agrees with the theoretical results~(Corollary~\ref{cor:bias_variance_ordering}).

Third, it can be observed~(from the 1\textsuperscript{st} column of Fig.~\ref{fig:Var_MSE_CRLB}) that the bias of the proposed friendship paradox based MLE~$\fpMLE$ is negligible and also smaller compared to that of the vanilla MLE $\vanillaMLE$ when the power-law exponent~$\alpha$ is larger than $2.5$ (for all considered values of the minimum degree~$\mindegree$). This agrees with the theoretical conclusions~(Corollary~\ref{cor:bias_variance_ordering}). However, it can be seen that the bias of the proposed estimate~$\fpMLE$ starts to slightly increase when the power-law exponent $\alpha$ decreases below approximately $2.5$  (especially for values of the minimum degree $\mindegree$ larger than $5$).  This does not affect the conclusion that the friendship paradox based methods outperform the vanilla methods based on uniform sampling because the friendship paradox based methods still have a smaller MSE as seem from the third column of Fig.~\ref{fig:Var_MSE_CRLB}. Further, this slight increase of the bias contradicts the statistical analysis~(Theorem~\ref{th:bias_variance_fpMLE}) which suggests that the bias of the friendship paradox based MLE~$\fpMLE$ should be a monotone function of the power-law exponent~$\alpha$ and should not depend on the minimum degree~$\mindegree$. We believe that this disparity between the theoretical analysis and the numerical analysis for small values of the power-law exponent $\alpha$ (i.e.~$\alpha < 2.5$) could be due to the bias introduced when representing a heavy-tailed power-law degree distribution with a finite graph in our numerical experiments.

%\vspace{0.1cm}
\paragraph{Effect of the sample size $\samplesize$: }
\begin{figure}
	\includegraphics[width=\columnwidth,  trim={0.2cm 0.2cm 0.15cm 0.75cm},clip]{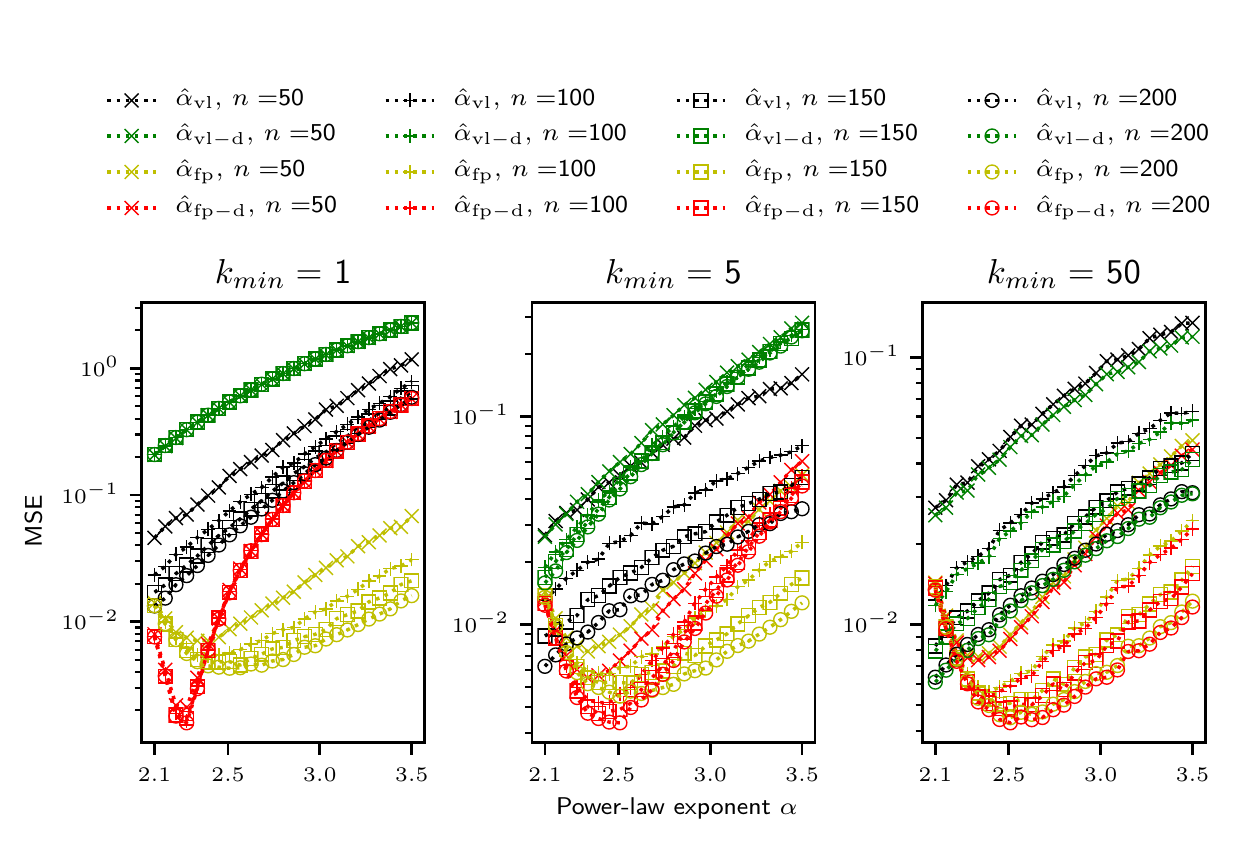}
	\caption{ 
		Empirical mean-squared error~(MSE) of the four estimates~(presented in Sec.~\ref{sec:MLEs}): vanilla maximum likelihood estimate~(MLE)~${\vanillaMLE}$ given in Eq.~\eqref{eq:vanilla_MLE},  friendship paradox based MLE~$\fpMLE$ given in Eq.~\eqref{eq:FP_MLE} and their discrete versions~$\vanillaDiscreteMLE, \fpDiscreteMLE$ given in Eq.~\eqref{eq:vanilla_discrete_MLE} and Eq.~\eqref{eq:FP_discrete_MLE} respectively, for different values of the sample size~${\samplesize}$~(recall that Fig.~\ref{fig:Var_MSE_CRLB} was for ${\samplesize = 100}$). The plots show that the proposed friendship paradox based MLEs~$\fpMLE, \fpDiscreteMLE$ achieve smaller MSEs with only a smaller sample size $\samplesize$ compared to the widely used uniform sampling based vanilla MLEs~$\vanillaMLE, \vanillaDiscreteMLE$ and thus, confirm the conclusions reached in the theoretical analysis~(Sec.~\ref{sec:statistical_analysis}).
	}
	\label{fig:SampleSize_vs_MSE}
\end{figure}
 Fig.~\ref{fig:SampleSize_vs_MSE} illustrates the MSE of the four estimates of the power-law exponent~$\alpha$ presented in Sec.~\ref{sec:MLEs}: vanilla MLE~${\vanillaMLE}$ given in Eq.~\eqref{eq:vanilla_MLE},  friendship paradox based MLE~$\fpMLE$ given in Eq.~\eqref{eq:FP_MLE} and their discrete versions~$\vanillaDiscreteMLE, \fpDiscreteMLE$ given in Eq.~\eqref{eq:vanilla_discrete_MLE} and Eq.~\eqref{eq:FP_discrete_MLE} respectively, for four values of the sample size $\samplesize = 50, 100, 150, 200$. Note that the friendship paradox based MLE~$\fpMLE$ and its discrete version~$\fpDiscreteMLE$ have smaller MSEs compared to the widely used vanilla methods based on uniform sampling~$\fpMLE, \fpDiscreteMLE$ for all considered values of the sample size. Further,~Fig.~\ref{fig:SampleSize_vs_MSE} shows that the friendship paradox based MLE~$\fpMLE$ and its discrete version~$\fpDiscreteMLE$ have approximately equal MSEs for larger values of the minimum degree~($\mindegree\geq 50$). The slight increase~(approximately $0.01$) of the MSE of the friendship paradox based methods~$\fpMLE, \fpDiscreteMLE$ when the power-law exponent $\alpha$ decreases below $2.3$ could be due to the bias introduced when representing a power-law degree distribution with a finite graph in numerical experiments~(as also discussed using the Fig.~\ref{fig:Var_MSE_CRLB} earlier). Thus, Fig.~\ref{fig:SampleSize_vs_MSE} further verifies the conclusions drawn from both the statistical analysis in Sec.~\ref{sec:statistical_analysis} and the observations from Fig.~\ref{fig:Var_MSE_CRLB} regarding the better performance of the friendship paradox based MLE~$\fpMLE$ and its discrete version~$\fpDiscreteMLE$ compared to the vanilla methods based on uniform sampling~$\vanillaMLE, \vanillaDiscreteMLE$. 

\vspace{0.25cm}
\noindent
{\bf Summary of Numerical Comparison: } This section presented numerical results that complement and support the statistical analysis~(Sec.~\ref{sec:statistical_analysis}) of the estimates. Specifically, it was shown that the empirical MSE of the proposed friendship paradox based MLE~$\fpMLE$ and its discrete version~$\fpDiscreteMLE$ have smaller MSEs compared to the currently used vanilla methods that are based on uniform sampling~$\vanillaMLE, \vanillaDiscreteMLE$ for various values of the power-law exponent~$\alpha$, the minimum degree~$\mindegree$ and the sample size~$\samplesize$. Further, the variance of the friendship paradox based MLE~$\fpMLE$ and its discrete version~$\fpDiscreteMLE$ are approximately equal to the best achievable variance by any unbiased estimate under friendship paradox based sampling~($\CRLBfp$) which is smaller compared to the best achievable variance by any unbiased estimate under uniform sampling sampling~($\CRLBvanilla$). Further, the friendship paradox based MLEs have a negligible bias when the power-law exponent $\alpha$ is larger than $2.5$ and a slight increase in bias is observed when $\alpha$ decreases below~$2.5$ that could be due to the error of imposing a power-law degree distribution on a finite sized graph. The friendship paradox based MLE~$\fpMLE$ has a slightly smaller MSE compared to its discrete version~$\fpDiscreteMLE$ for smaller minimum degree values~($\mindegree\leq5$). Therefore, the numerical results verify the conclusion reached from the statistical analysis~(Sec.~\ref{sec:statistical_analysis}) that friendship paradox based MLEs~$\fpMLE, \fpDiscreteMLE$ yield a better accuracy compared to the vanilla methods based on uniform sampling~$\vanillaMLE, \vanillaDiscreteMLE$ and, the difference in the accuracy~(in terms of MSE) could be up to several orders of magnitude.

\section{Empirical Results using Real-World Network Datasets}
\label{sec:empirical_results}

In this section, we use several real-world network datasets to compare the performance of the four estimates presented in Sec.~\ref{sec:MLEs}: vanilla MLE~${\vanillaMLE}$ given in Eq.~\eqref{eq:vanilla_MLE},  friendship paradox based MLE~$\fpMLE$ given in Eq.~\eqref{eq:FP_MLE} and their discrete versions~$\vanillaDiscreteMLE, \fpDiscreteMLE$ given in Eq.~\eqref{eq:vanilla_discrete_MLE} and Eq.~\eqref{eq:FP_discrete_MLE} respectively. Our aim is to illustrate how the better performance of the proposed friendship paradox based methods~($\fpMLE, \fpDiscreteMLE$) compared to the widely used vanilla methods~($\vanillaDiscreteMLE, \fpDiscreteMLE$) seen from the statistical analysis~(Sec.~\ref{sec:statistical_analysis}) and numerical experiments~(Sec.~\ref{sec:numerical_results}) holds in real-world scenarios as well. 

\vspace{0.25cm}
\noindent
{\bf Description of Datasets: } We use 18 datasets obtained from the corpus used in the large scale study presented in~\cite{broido2019scale}. The datasets are publicly available also from the Colorado Index of Complex Networks~(ICON)~\cite{ICONColorado}. The 18 datasets include:
\begin{itemize}
	\item twelve biological networks~(metabolic networks of various species, protein interaction networks and a food-web network)
	
	\item three informational networks~(two networks of word adjacency in texts of two languages and the network of similarity scores among books sold on Amazon website)
	
	\item three technological networks~(three networks that represent dependencies in software systems  as complex networks).	
\end{itemize}
 The complementary cumulative distribution function~(complementary CDF) $\CCDF$ of the degree distribution (i.e.~$\CCDF(k)$ denotes the fraction of nodes with degrees greater than~$k$) of each network dataset was visually examined on double-logarithmic axes. Then, the minimum degree~$\mindegree$ is chosen such that the complementary CDF of the degree distribution~$\CCDF(k)$ is approximately a straight line on double-logarithmic axes in the region $k\geq\mindegree$. This ensures that the network datasets we use~(at least approximately) satisfy the power-law assumption~(Assumption~\ref{assumption:continuous_distribution}). Fig.~\ref{fig:EmpiricalResults_Biological}~(biological networks) and Fig.~\ref{fig:EmpiricalResults_Technological_and_Informational}~(informational and technological networks) show the chosen value of the minimum degree~$\mindegree$, the number of nodes with degrees greater than the minimum degree~$\ntail$~(which is effectively the number of nodes in the network for our purpose) and the complementary CDF~$\CCDF(k)$~(in blue color) for each considered dataset. The sizes of the datasets~$\ntail$ range from $226$ nodes to $268940$ nodes. Further information about the datasets are given in Appendix~\ref{append:reproducibility}.

 \vspace{0.25cm}
 \noindent
 {\bf Experimental Setup:} For each network dataset, the sample size~$\samplesize$ is set to be the minimum of $5\%$ of network size~$\ntail$~(rounded to the nearest integer) or $1000$. The four estimates $\vanillaMLE, \vanillaDiscreteMLE, \fpMLE, \fpDiscreteMLE$~(discussed in Sec.~\ref{sec:MLEs}) and their empirical variances are calculated using a Monte-Carlo average with $1000$ iterations. Further, the re-weighted Kolmogorov-Smirnov~(KS) statistic~(referred simply as the KS-statistic henceforth)~\cite{clauset2009power},
\begin{equation}
\label{eq:KS_statistic}
D(\hat{\alpha}) = \max_{k \geq \mindegree} \frac{|\CCDF_{\hat{\alpha}}(k) - \CCDF(k)|}{\sqrt{\CCDF(k)(1 - \CCDF(k))}}, \quad \hat{\alpha} \in \{\vanillaMLE, \vanillaDiscreteMLE, \fpMLE, \fpDiscreteMLE\}
\end{equation}
 is used to quantify the distance between the empirical complementary CDF $\CCDF$ and the complementary CDF~$\CCDF_{\hat{\alpha}}$ of a power-law distribution~(of the form Eq.~\eqref{eq:power_law_distribution}) with the estimated power-law exponent $\hat{\alpha} \in \{\vanillaMLE, \vanillaDiscreteMLE, \fpMLE, \fpDiscreteMLE\}$ via the four methods discussed in Sec.~\ref{sec:MLEs}. 
 
 \begin{figure}
 	\includegraphics[width=\columnwidth,  trim={0.19cm 0.23cm 0.1cm 0.1cm},clip]{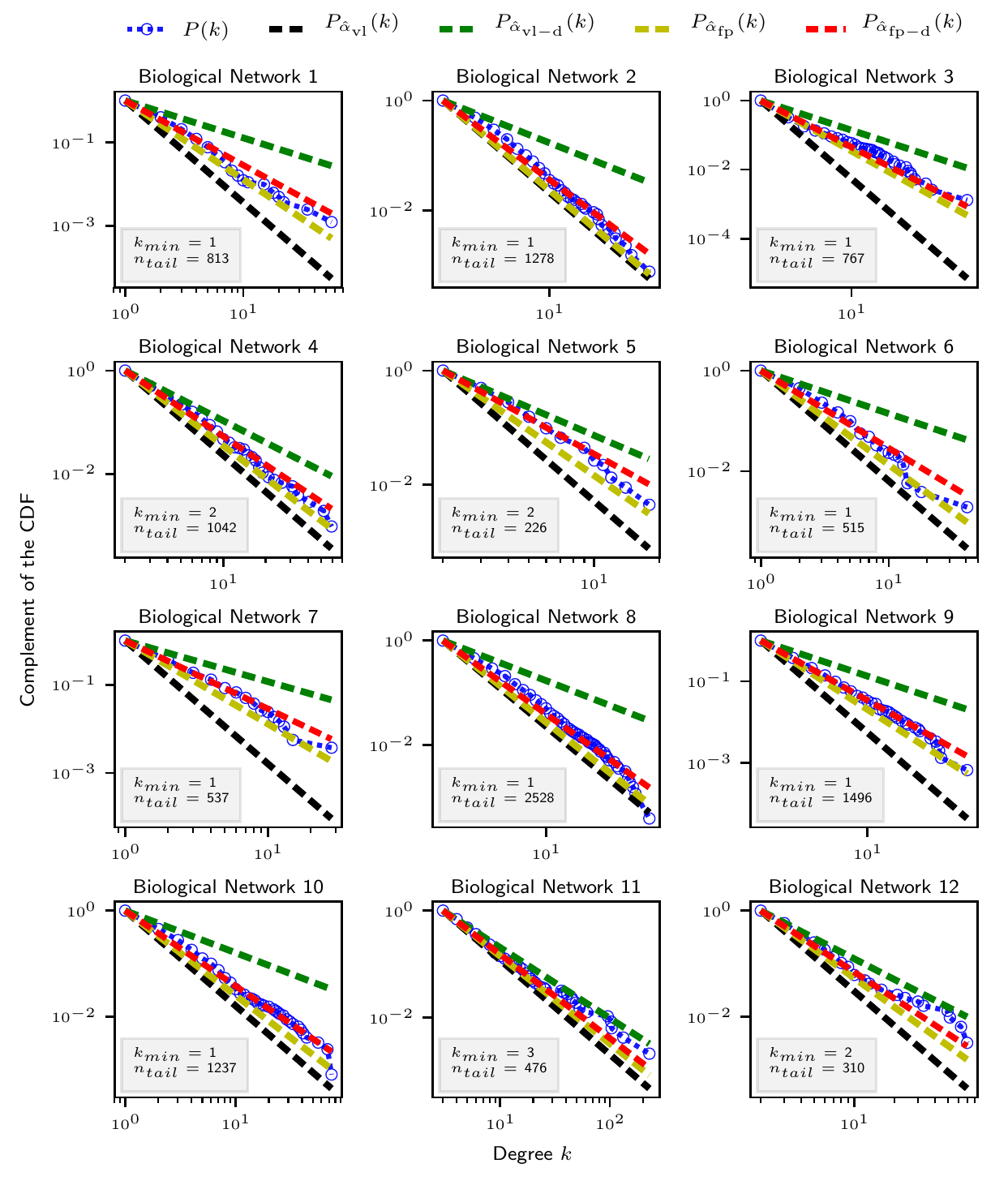}
 	\caption{
 		Plots show the complementary CDF~$\CCDF$ of the degree distribution of each of the~$12$ real-world biological network datasets~(discussed in Sec.~\ref{sec:empirical_results} and Appendix~\ref{append:reproducibility}) and the complementary CDFs of power-law distributions ($\CCDF_{\vanillaMLE}, \CCDF_{\vanillaDiscreteMLE}, \CCDF_{\fpMLE}, \CCDF_{\fpDiscreteMLE}$) with the power-law exponent values estimated from the four methods discussed in Sec.~\ref{sec:MLEs}~($\vanillaMLE, \vanillaDiscreteMLE, \fpMLE, \fpDiscreteMLE$). The estimated values and their variances are given in Table~\ref{table:empirical_results}. The plots indicate that the friendship paradox based maximum likelihood estimate~(MLE)~$\fpMLE$ and its discrete version~$\fpDiscreteMLE$ obtain better power-law fits compared to the vanilla methods based on uniform sampling~$\vanillaMLE,\vanillaDiscreteMLE$~(that either overestimate or underestimate the power-law exponent). This observation confirms the conclusion reached from statistical analysis~(Sec.~\ref{sec:statistical_analysis}) and numerical experiments~(Sec.~\ref{sec:numerical_results}) that the proposed friendship paradox based MLEs~$\fpMLE, \fpDiscreteMLE$ outperform the widely used vanilla MLEs based on uniform sampling~$\vanillaMLE, \vanillaDiscreteMLE$ for estimating power-law degree distributions. 
 	}
 	\label{fig:EmpiricalResults_Biological}
 \end{figure}

 \begin{figure}
 	\includegraphics[width=\columnwidth,  trim={0.19cm 0.2cm 0.1cm 0.55cm},clip]{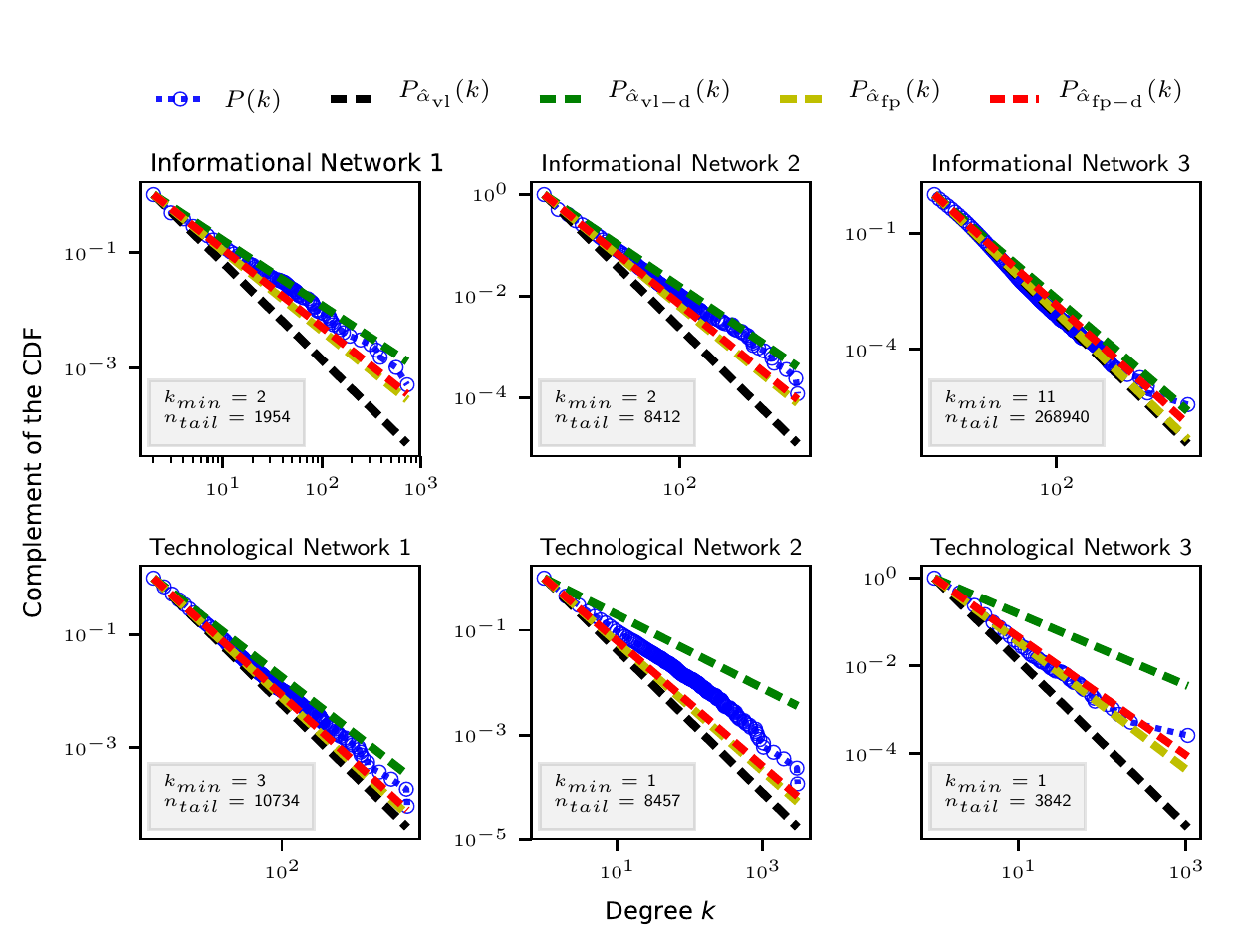}
 	\caption{
 		 		Plots show the complementary CDF~$\CCDF$ of the degree distribution of each of the~$6$ real-world information and technological network datasets~(discussed in Sec.~\ref{sec:empirical_results} and Appendix~\ref{append:reproducibility}) and the complementary CDFs of power-law distributions ($\CCDF_{\vanillaMLE}, \CCDF_{\vanillaDiscreteMLE}, \CCDF_{\fpMLE}, \CCDF_{\fpDiscreteMLE}$) with the power-law exponent values estimated from the four methods discussed in Sec.~\ref{sec:MLEs}~($\vanillaMLE, \vanillaDiscreteMLE, \fpMLE, \fpDiscreteMLE$). The estimated values and their variances are given in Table~\ref{table:empirical_results}. The plots indicate that the friendship paradox based maximum likelihood estimate~(MLE)~$\fpMLE$ and its discrete version~$\fpDiscreteMLE$ obtain better power-law fits compared to the vanilla MLEs based on uniform sampling~$\vanillaMLE,\vanillaDiscreteMLE$~(that either overestimate or underestimate the power-law exponent). This observation is similar to the one that is observed from biological networks~(Fig.~\ref{fig:EmpiricalResults_Biological}) and, also confirms the conclusion reached from statistical analysis~(Sec.~\ref{sec:statistical_analysis}) and numerical experiments~(Sec.~\ref{sec:numerical_results}) that the proposed friendship paradox based MLEs~$\fpMLE, \fpDiscreteMLE$ outperform the widely used vanilla MLEs based on uniform sampling~$\vanillaMLE, \vanillaDiscreteMLE$ for estimating power-law degree distributions. 
 	}
 	\label{fig:EmpiricalResults_Technological_and_Informational}
 \end{figure} 
 
 \vspace{0.1cm}
 \paragraph{Empirical Results}: Fig.~\ref{fig:EmpiricalResults_Biological}~(biological networks) and Fig.~\ref{fig:EmpiricalResults_Technological_and_Informational}~(informational and technological networks) illustrate the empirical complementary CDF~$\CCDF$~(using the $\ntail$ number of nodes in each dataset) and the complementary CDFs of the power-law distributions~(of the form Eq.~\eqref{eq:power_law_distribution}) with power-law exponent set to be the estimated values using the four method~$\vanillaMLE, \vanillaDiscreteMLE, \fpMLE, \fpDiscreteMLE$. Further, Table~\ref{table:empirical_results} summarizes the value of the KS-statistic~(given in Eq.~\eqref{eq:KS_statistic}) and the variance of each of the four estimates~$\vanillaMLE, \vanillaDiscreteMLE, \fpMLE, \fpDiscreteMLE$. Several important observations that complement the statistical analysis~(Sec.~\ref{sec:statistical_analysis}) and the numerical experiments~(Sec.~\ref{sec:numerical_results}) can be drawn from the empirical results~(Fig.~\ref{fig:EmpiricalResults_Biological}, Fig.~\ref{fig:EmpiricalResults_Technological_and_Informational} and Table~\ref{table:empirical_results}).
 
 First, it can be visually observed from Fig.~\ref{fig:EmpiricalResults_Biological} and Fig.~\ref{fig:EmpiricalResults_Technological_and_Informational} that the proposed friendship paradox based MLE~$\fpMLE$ and its discrete version~$\fpDiscreteMLE$ fit the empirical degree distributions better compared to the vanilla methods that are based on uniform sampling~$\vanillaMLE, \vanillaDiscreteMLE$. Especially, it can be seen that the vanilla MLE~$\vanillaMLE$ overestimates the value of the power-law exponent~$\alpha$ while its discrete version~$\vanillaDiscreteMLE$ underestimates it (compared to the two friendship paradox based methods), especially when the minimum degree is small~(i.e.~$\mindegree \leq 5$). This observation also agrees with the theoretical results~(Corollary~\ref{cor:bias_variance_ordering}) as well as the numerical experiments~(presented in Sec.~\ref{sec:numerical_results}) - for example, note from the sub-figure for $\mindegree = 1$ in Fig.~\ref{fig:Var_MSE_CRLB} that vanilla MLE~$\vanillaMLE$ has a positive bias and its approximation to discrete distributions~$\vanillaDiscreteMLE$ has a negative bias which are both larger in magnitude compared to the friendship paradox based MLEs. Apart from visual observation, the values of the KS-statistic given in Table~\ref{table:empirical_results} also verify that the friendship paradox based MLEs~$\fpMLE, \fpDiscreteMLE$ of the power-law exponent~$\alpha$ fit the empirical data better compared to the vanilla MLEs based on uniform sampling~$\vanillaMLE, \vanillaDiscreteMLE$. Of the two proposed friendship paradox based estimates, the discrete version~$\fpDiscreteMLE$ seems to have a better performance compared to~$\fpMLE$ in terms of the KS-statistic based comparison in all cases except two~(Informational Network~1 and Informational Network~2) - in those two cases the friendship paradox based MLE~$\fpMLE$ does better~(see Table~\ref{table:empirical_results}). Thus, the empirical results indicate that the friendship paradox based MLE~$\fpMLE$ and its discrete version~$\fpDiscreteMLE$ offer better accuracy in fitting power-law degree distributions to networks compared to the widely used vanilla MLEs based on uniform sampling~$\vanillaMLE, \vanillaDiscreteMLE$. 
 
 Second, the values of the variances of the four estimates~$\vanillaMLE, \vanillaDiscreteMLE, \fpMLE, \fpDiscreteMLE$ given in Table~\ref{table:empirical_results} indicate that the friendship paradox based methods have smaller variances compared to the vanilla methods based on uniform sampling. More specifically, the friendship paradox based discrete MLE~$\fpDiscreteMLE$ has the smallest variance amongst all four methods in all considered cases.  The difference in variance between $\fpDiscreteMLE$ and the vanilla methods is several orders of magnitude in most cases as seen from Table~\ref{table:empirical_results}. This observation also agrees with the conclusions reached from the statistical analysis~\ref{sec:statistical_analysis}~(Corollary~\ref{cor:bias_variance_ordering}) and the numerical experiments~(second column of Fig.~\ref{fig:Var_MSE_CRLB}). 
 
\vspace{0.25cm}
\noindent
{\bf Summary of Empirical Results: }This section used $18$ real-world network datasets to empirically evaluate the performance of the four methods presented in Sec.~\ref{sec:MLEs}: vanilla maximum likelihood estimate~${\vanillaMLE}$~(given in~Eq.~\eqref{eq:vanilla_MLE}),  friendship paradox based MLE~$\fpMLE$~(given in~Eq.~\eqref{eq:FP_MLE}) and their discrete versions~$\vanillaDiscreteMLE, \fpDiscreteMLE$~(given in~Eq.~\eqref{eq:vanilla_discrete_MLE} and~Eq.~\eqref{eq:FP_discrete_MLE}, respectively).  The empirical results indicate that the proposed friendship paradox based methods yield a better estimate of the power-law exponent for network data in the sense that the estimated power-law distribution results in a smaller~(in comparison to the vanilla methods)  Kolmogorov-Smirnov~(KS) statistic. Further, the friendship paradox based methods also result in a smaller~(up to several orders of magnitude) variance compared to the widely used vanilla methods that are based on uniform sampling. Of the two friendship paradox based methods, the friendship paradox based discrete MLE~$\fpDiscreteMLE$ shows a better accuracy compared to $\fpMLE$ in terms of both KS-statistic and variance in most cases. The empirical results support and verify the conclusion reached from the statistical analysis in Sec.~\ref{sec:statistical_analysis} and the numerical experiments in Sec.~\ref{sec:numerical_results}: the proposed friendship paradox based maximum likelihood estimation method yields better results in estimating the exponent of power-law degree distributions compared to the widely used vanilla methods based on uniform sampling.

\begin{landscape}
	\thispagestyle{mylandscape} %Call the  predefined page type
	\centering

	\begin{table}
	\vspace{-0.1cm}
			\caption{Empirical results obtained by applying the estimates presented in Sec.~\ref{sec:MLEs} to $18$ real-world datasets}
	\centering
			\begin{tabular}{lll|llll|lll>{\bf}l|lll>{\bf}l}
			\toprule
			{} &  $n_{\mathrm{tai}l}$ & $\mindegree$ & $\vanillaMLE$ & $\vanillaDiscreteMLE$ & $\fpMLE$ & $\fpDiscreteMLE$ &      $D(\vanillaMLE)$ & $D(\vanillaDiscreteMLE)$ &       $D(\fpMLE)$ & $D(\fpDiscreteMLE)$ &       $\var\{\vanillaMLE\}$ &$\var\{\vanillaDiscreteMLE\}$ &        $\var\{\fpMLE\}$ & $\var\{\fpDiscreteMLE\}$\\
			Name                    &         &        &             &                      &             &                      &        &               &         &               &           &                &           &                \\
			\midrule
			Biological Network 1    &     813 &      1 &        3.44 &                 1.89 &        2.89 &                 2.55 &  0.549 &         0.383 &   0.286 &         0.116 &  4.36e-01 &       6.00e-03 &  1.73e-02 &       2.42e-03 \\
			Biological Network 2    &    1278 &      1 &        2.67 &                 1.77 &        2.63 &                 2.44 &  0.372 &         0.358 &   0.351 &         0.246 &  8.53e-02 &       3.68e-03 &  3.42e-03 &       7.91e-04 \\
			Biological Network 3    &     767 &      1 &        3.26 &                 1.86 &        2.46 &                 2.35 &  0.862 &         0.448 &   0.164 &         0.131 &  6.93e-01 &       1.13e-02 &  3.34e-03 &       1.03e-03 \\
			Biological Network 4    &    1042 &      2 &        3.34 &                 2.39 &        3.08 &                 2.82 &  0.321 &         0.193 &     0.2 &        0.0801 &  1.59e-01 &       1.93e-02 &  2.17e-02 &       6.83e-03 \\
			Biological Network 5    &     226 &      2 &        4.28 &                 2.63 &        3.63 &                 3.09 &  0.578 &          0.19 &    0.32 &         0.137 &  1.83e+00 &       9.37e-02 &  2.22e-01 &       3.94e-02 \\
			Biological Network 6    &     515 &      1 &        3.19 &                 1.85 &        2.87 &                 2.54 &  0.546 &         0.389 &   0.384 &         0.212 &  4.02e-01 &       9.04e-03 &  2.05e-02 &       2.79e-03 \\
			Biological Network 7    &     537 &      1 &        3.79 &                 1.93 &        2.89 &                 2.55 &  0.765 &         0.363 &   0.221 &        0.0788 &  8.89e-01 &       1.04e-02 &  3.26e-02 &       4.12e-03 \\
			Biological Network 8    &    2528 &      1 &        2.64 &                 1.77 &        2.56 &                  2.4 &  0.352 &         0.327 &   0.297 &         0.191 &  3.73e-02 &       1.66e-03 &  1.26e-03 &       3.38e-04 \\
			Biological Network 9    &    1496 &      1 &        3.26 &                 1.87 &        2.69 &                 2.46 &  0.471 &         0.353 &   0.152 &        0.0348 &  2.14e-01 &       4.07e-03 &  5.46e-03 &       1.12e-03 \\
			Biological Network 10   &    1237 &      1 &         2.8 &                 1.79 &         2.6 &                 2.42 &  0.402 &         0.351 &    0.28 &         0.169 &  8.94e-02 &       3.48e-03 &  3.49e-03 &       8.48e-04 \\
			Biological Network 11   &     476 &      3 &        2.77 &                 2.32 &        2.63 &                 2.57 &  0.187 &          0.17 &   0.141 &         0.117 &  1.86e-01 &       5.46e-02 &  1.21e-02 &       7.66e-03 \\
			Biological Network 12   &     310 &      2 &        3.17 &                  2.3 &        2.82 &                 2.66 &  0.395 &          0.19 &   0.205 &         0.139 &  6.45e-01 &       7.75e-02 &  4.27e-02 &       1.56e-02 \\
			Informational Network 1 &    1954 &      2 &        2.72 &                 2.14 &        2.39 &                 2.35 &  0.339 &           0.3 &   {\bf 0.174} &         {\normalfont 0.192} &  6.19e-02 &       1.21e-02 &  9.24e-04 &       5.99e-04 \\
			Informational Network 2 &    8412 &      2 &        2.56 &                 2.08 &        2.29 &                 2.27 &  0.245 &         0.286 &   {\bf 0.167} &         {\normalfont 0.179} &  1.16e-02 &       2.62e-03 &  1.02e-04 &       7.35e-05 \\
			Informational Network 3 &  268940 &     11 &        4.23 &                 3.81 &         4.2 &                 3.99 &  0.124 &        0.0699 &   0.115 &        0.0624 &  1.25e-02 &       7.13e-03 &  5.81e-03 &       3.93e-03 \\
			Technological Network 1 &   10734 &      3 &        2.46 &                 2.15 &        2.39 &                 2.36 &  0.111 &         0.102 &  0.0697 &        0.0547 &  4.29e-03 &       1.67e-03 &  2.05e-04 &       1.55e-04 \\
			Technological Network 2 &    8457 &      1 &        2.37 &                  1.7 &        2.23 &                 2.19 &  0.298 &         0.335 &   0.184 &         0.161 &  9.75e-03 &       6.54e-04 &  2.96e-05 &       1.66e-05 \\
			Technological Network 3 &    3842 &      1 &        2.87 &                 1.81 &        2.45 &                 2.34 &  0.418 &         0.388 &   0.198 &         0.137 &  3.29e-02 &       1.16e-03 &  9.16e-04 &       3.05e-04 \\
			\bottomrule
		\end{tabular}
	\label{table:empirical_results}
	\vspace{-0.4cm}
	\end{table}

		\justify
		\begin{compactitem}
			\item Table~\ref{table:empirical_results} shows:
			\begin{compactenum}[i.)]
				\item values of the four estimates presented in Sec.~\ref{sec:MLEs}: vanilla MLE~${\vanillaMLE}$~(Eq.~\eqref{eq:vanilla_MLE}),  friendship paradox based MLE~$\fpMLE$~(Eq.~\eqref{eq:FP_MLE}) and their discrete versions~$\vanillaDiscreteMLE$~(Eq.~\eqref{eq:vanilla_discrete_MLE})  and $\fpDiscreteMLE$~(Eq.~\eqref{eq:FP_discrete_MLE}), obtained using~$18$ real-world datasets discussed in Sec.~\ref{sec:empirical_results}. 
				
				\item Kolmogorov-Smirnov~(KS) statistic $D(\hat{\alpha})$ given in Eq.~\eqref{eq:KS_statistic} which quantifies the distance between the empirical complementary CDF $\CCDF$ and the complementary CDF~$\CCDF_{\hat{\alpha}}$ of a power-law distribution~(of the form Eq.~\eqref{eq:power_law_distribution}) with the estimated power-law exponent $\hat{\alpha} \in \{\vanillaMLE, \vanillaDiscreteMLE, \fpMLE, \fpDiscreteMLE\}$. The values in bold-font indicate the smallest value of the KS-statistic for each dataset.
				
				\item Variance of each of the four estimates $\var\{\vanillaMLE\}, \var\{\vanillaDiscreteMLE\}, \var\{\fpMLE\}, \var\{\fpDiscreteMLE\}$ with the smallest of them for each dataset highlighted in bold-font.
			\end{compactenum}
			\item The values shown in the Table~\ref{table:empirical_results} indicate that the friendship paradox based MLEs, especially the discrete version~$\fpDiscreteMLE$, yield the best results amongst all four considered methods. This observation agrees with both statistical analysis~(Sec.~\ref{sec:statistical_analysis}) and numerical experiments~(Sec.~\ref{sec:numerical_results}).
		\end{compactitem}

\end{landscape}

\section{Conclusion and Extensions}
\noindent
{\bf Conclusion: }This paper proposed a maximum likelihood estimator that exploits the \textit{friendship paradox} based sampling of nodes in order to estimate a power-law degree distribution of a network. Most widely used parametric methods such as linear regression, vanilla maximum likelihood estimation with uniformly sampled nodes can produce unreliable results due to the lack of samples from the characteristic heavy tail of the degree distribution. Other non-parametric methods have used friendship paradox based sampling (to yield more samples from the heavy tail of the degree distribution) but do not exploit the explicit parametric form of the power-law degree distribution and also do not provide any analytical quantification of the estimation accuracy. In contrast, by exploiting both the friendship paradox based sampling (to capture the heavy tail of the degree distribution) and the explicit parametric form of the power-law degree distribution (to derive the precise expression of the estimator), the proposed method results in a provably better accuracy compared to the state of the art methods. 
%The proposed method is different to other methods that exploit friendship paradox in that we combine the friendship paradox based sampling with maximum likelihood estimation in a principled manner such that it results in a provably better estimate compared to the state of the art. 
The analytical results show that the proposed method  is strongly consistent and has a smaller bias, a smaller variance and a smaller Cram\`{e}r-Rao lower bound compared to the widely used vanilla maximum likelihood estimation method~(with uniform samples). Numerical and empirical results illustrate the better accuracy of the proposed method~(compared to widely used alternative methods) including its properties such as smaller bias, near optimal variance and sample efficiency. All analytical, numerical and empirical results lead to the conclusion that the proposed friendship paradox based maximum likelihood estimation method achieves an improved accuracy compared to the widely used vanilla maximum likelihood estimation method in both finite and asymptotic sample size regimes. It was also shown that the proposed method easily generalizes to degree distributions with parametric forms other than power-law~(such as exponential degree distributions) while preserving all its desirable statistical properties. 

\vspace{0.25cm}
\noindent
{\bf Limitations and Extensions:} There are several limitations of the proposed method that open up future research directions. Firstly, the proposed estimate is based on the assumption that the underlying social network has a power-law degree distribution. However, recent works (e.g.~\cite{broido2019scale}) have pointed out that there might be other types of degree distributions that fit some real-world networks better than a power-law. Though our results indicate that the proposed method easily generalizes to other types of degree distributions such as the exponential degree distribution~(Appendix~\ref{append:MLE_exponential}), it remains to be extended to other types of distributions such as the log-normal distribution which is also prevalent in real-world networks. Secondly, extending the proposed framework to directed networks by using the versions of friendship paradox for directed graphs~\cite{alipourfard2019friendship,higham2018centrality} is of importance due to the power-law behavior of the in- and out- degree distributions of real world directed networks~\cite{bollobas2003directed, cohen2003directed}. 
Such extensions might also be able to exploit the correlations~\cite{williams2014degree} between in- and out- degrees to make the estimation further efficient. Thirdly, extending the proposed framework to a setting where the aim is to track a time-evolving power-law degree distribution~(i.e.~the power-law exponent~$\alpha$ of the degree distribution evolves over time) is also an interesting future research direction. Bayesian and stochastic approximation methods~(such as \cite{hamdi2014_tracking, krishnamurthy2017tracking}) 
based on friendship paradox based sampling might be suitable for such contexts. Finally, the method we proposed in this paper is based on the original version of the friendship paradox~(Theorem~\ref{th:friendship_paradox_Feld}). Exploring how other versions of the friendship paradox~(such as spectral friendship paradox~\cite{higham2018centrality}) can be used in estimating network properties other than degree distributions~(e.g.~spectra of networks) also remains an interesting future direction.

%%
%% The acknowledgments section is defined using the "acks" environment
%% (and NOT an unnumbered section). This ensures the proper
%% identification of the section in the article metadata, and the
%% consistent spelling of the heading.
\begin{acks}
	The authors thank Aaron Clauset at University of Colorado at Boulder and Johan Ugander at Stanford University for helpful comments.  This research was supported by the U.~S. Army Research Office under grant W911NF-19-1-0365, and the National Science Foundation under grant 1714180.
\end{acks}
%%
%% The next two lines define the bibliography style to be used, and
%% the bibliography file.
%\bibliographystyle{ACM-Reference-Format}
%\bibliography{TKDD_v4_ref}

\newpage 
%%
%% If your work has an appendix, this is the place to put it.
\appendix

\section{Proof of Theorem~\ref{th:bias_variance_fpMLE}}
\label{append:bias_variance_fpMLE}

	The proof of Theorem~\ref{th:bias_variance_fpMLE} is inspired by the ideas used in the proof of Theorem~\ref{th:bias_variance_vanillaMLE}. Note from the expression for $\fpMLE$ in Eq.~\eqref{eq:FP_MLE} that, 
	\begin{align}
		\label{eq:bias_variance_fpMLE_appendix}
		\begin{split}
			\bias\{\fpMLE\} &= \mathbb{E}\{\fpMLE\} - \alpha %\nonumber\\
			= \samplesize\mathbb{E}\Bigg\{		\frac{1}{\sum_{i = 1}^{\samplesize}\ln\Big(\frac{d(Y_i)}{\mindegree}\Big)} \Bigg\} + 2 - \alpha \; \text{,}\\
			\var\{\fpMLE\} & = \samplesize^2\var \Bigg\{		\frac{1}{\sum_{i = 1}^{\samplesize}\ln\Big(\frac{d(Y_i)}{\mindegree}\Big)}	\Bigg\}.
		\end{split}
	\end{align}
	It can easily be shown that 
	\begin{equation}
	\label{eq:expoential_cdf_of_log_dY}
	{\mathbb{P}\Big\{\ln\Big(\frac{d(Y_i)}{\mindegree}\Big) \leq c\Big\} = 1 -   e^{-c(\alpha - 2)}}\quad i = 1,\dots, \samplesize, \nonumber
	\end{equation}
	and therefore, $\ln\Big(\frac{d(Y_i)}{\mindegree}\Big) \; i = 1,\dots, \samplesize$ are independent and identically distributed~(iid) exponential random variables with parameter $\alpha - 2$. Since the sum of iid exponential random variables is a Gamma random variable, it follows that
	\begin{align}
		\frac{1}{\sum_{i = 1}^{\samplesize}\ln\Big(\frac{d(Y_i)}{\mindegree}\Big)} &\sim \text{Inv-Gamma}(\samplesize, \alpha - 2) 	\label{eq:inverse_Gamma_dist}
	\end{align}
	where, ${\text{Inv-Gamma}(\samplesize, \alpha - 2)}$ denotes Inverse-Gamma distribution~(distribution of the reciprocal of a Gamma distributed random variable) with parameters $\samplesize$ and $\alpha -2$. Eq.~\eqref{eq:inverse_Gamma_dist} then implies that~\cite{witkovsky2001computing},
	\begin{align}
		\label{eq:mean_variance_inv_gamma_appendix}
		\begin{split}
			\mathbb{E}\Bigg\{		\frac{1}{\sum_{i = 1}^{\samplesize}\ln\Big(\frac{d(Y_i)}{\mindegree}\Big)}\Bigg\} &= \frac{\alpha -2}{\samplesize-1},\; \textbf{ \normalfont for } \samplesize > 1\\
			\var\Bigg\{		\frac{1}{\sum_{i = 1}^{\samplesize}\ln\Big(\frac{d(Y_i)}{\mindegree}\Big)}\Bigg\}	&= \frac{(\alpha -2)^2}{(\samplesize - 1)^2(\samplesize - 2)},\; \textbf{ \normalfont for } \samplesize > 2.
		\end{split}
	\end{align}
	The result follows by substituting Eq.~\eqref{eq:mean_variance_inv_gamma_appendix} in the expressions for bias and variance in Eq.~\eqref{eq:bias_variance_fpMLE_appendix}. 

\section{Proof of Theorem~\ref{th:strong_consistency}}
\label{append:strong_consistency}
	The asymptotic unbiasedness and strong consistency of the vanilla MLE~$\vanillaMLE$ have been established in~\cite{muniruzzaman1957measures}.  We establish the strong consistency of the proposed friendship paradox based MLE~$\fpMLE$ below. 

%	See Appendix~\ref{append:proof_strong_consistency} for the proof of strong consistency of the proposed friendship paradox based MLE~$\fpMLE$.
Note from Eq.~\eqref{eq:bias_variance_vanillaMLE} and Eq.~\eqref{eq:bias_variance_FPmle} that bias of both vanilla MLE~$\vanillaMLE$ and friendship paradox based MLE~$\fpMLE$ goes to zero as the sample size $\samplesize$ tends to infinity. Hence, both estimates are asymptotically unbiased.  

Next, note from Eq.~\eqref{eq:FP_MLE} and Eq.~\eqref{eq:expoential_cdf_of_log_dY} that,
\begin{equation}
\fpMLE = \frac{1}{\bar{G}_\samplesize} + 2
\end{equation}
where,~$\bar{G}_n$ is the empirical mean of $\samplesize$ iid exponential random variables with parameter~$\alpha - 2$. Hence, by strong law of large numbers,~$\bar{G}_\samplesize$ converges almost surely to~$1/(\alpha - 2)$. Since the friendship paradox based MLE~$\fpMLE$ is a continuous function of~$\bar{G}_n$, $\fpMLE$ converges almost surely to~$\alpha$. 

\section{Proof of Theorem~\ref{th:CRLB}}
\label{append:CRLB}

	By definition~\cite{casella2002statistical}
\begin{align}
	\label{eq:definition_CRLB_FP_appendix}
	\CRLBfp(\alpha) &= -\Bigg[{\mathbb{E}\bigg\{\frac{\partial^2 \loglikelihoodY}{\partial^2 \alpha} \bigg\}}\Bigg]^{-1}
\end{align}
where, $\loglikelihoodY$ is the log-likelihood function for the friendship paradox based sampling defined in Eq.~\eqref{eq:loglikelihood_Y}. By evaluating Eq.~\eqref{eq:definition_CRLB_FP_appendix}, we get,
\begin{align}
	\CRLBfp(\alpha) &= -\Bigg[{\mathbb{E}\bigg\{ -\frac{\samplesize}{(\alpha -2)^2} \bigg\}}\Bigg]^{-1} = \frac{(\alpha -2)^2}{\samplesize}. \nonumber
\end{align}
Then, the fact that $q(k) \propto kp(k)$ (as discussed in Remark~\ref{remark:powerlaw_form_of_q}) can be used to obtain the expression for $\CRLBvanilla(\alpha)$ as a special case of $\CRLBfp(\alpha)$ by replacing $\alpha$ with $\alpha + 1$.
%\begin{align}
%	\CRLBvanilla(\alpha) &= -\Bigg[{\mathbb{E}\bigg\{\frac{\partial^2 \loglikelihoodX}{\partial^2 \alpha} \bigg\}}\Bigg]^{-1}=\frac{(\alpha -1)^2}{\samplesize}. \nonumber
%\end{align}

\section{Extension of results to exponential degree distributions}
\label{append:MLE_exponential}
In this appendix, we briefly explain how the main results of this paper %related to power-law degree distributions 
extend to exponential degree distributions.

Analogous to the power-law case, the following assumption is made for the derivation and analysis of MLEs for exponential case.
\begin{assumption}
	The exponential distribution is continuous in $k$ and is of the form,
	\begin{equation}
	\label{eq:exponential_distribution}
	\degdist_{exp}(k) = \lambda e^{-\lambda k}, \quad k\geq 0.
	\end{equation}
\end{assumption}

\subsection{Maximum Likelihood Estimation of the Exponential Distribution}

\noindent
{\bf Vanilla MLE for exponential degree distribution: } The vanilla maximum likelihood estimation of the rate parameter $\lambda$ of the degree distribution~(defined in Eq.~\eqref{eq:exponential_distribution}) begins by sampling~$n$~nodes $X_1, X_2, \dots, X_{\samplesize}$ independently and uniformly from the network. Then, the likelihood of observing the degree sequence $d(X_1), 
%d(X_2), 
\dots, d(X_{\samplesize})$~is,
\begin{align}
\mathbb{P}\{d(X_1), \dots, d(X_{\samplesize})\vert \lambda \} = \prod_{i = 1}^{\samplesize} \lambda e^{-\lambda d(X_i)} \nonumber
\end{align}
following~Eq.~\eqref{eq:exponential_distribution}. Therefore, the log-likelihood for vanilla MLE of the rate parameter $\lambda$ is,
\begin{align}
%	\begin{split}
%		\label{eq:loglikelihood_X_exponential}
\loglikelihoodX(\lambda) = \ln \mathbb{P}\{d(X_1), 
%		d(X_2), 
\dots, d(X_{\samplesize})\vert \lambda \} 
=\samplesize\ln (\lambda)  -\lambda \sum_{i = 1}^{\samplesize} d(X_i). \nonumber
%	\end{split}
\end{align}
Then, by solving $\frac{\partial \loglikelihoodX}{\partial \lambda} = 0$, we get the vanilla MLE of the rate parameter $\lambda$ as,
\begin{align}
\label{eq:vanilla_MLE_exponential}
\vanillaMLEexponential = \frac{\samplesize}{\sum_{i = 1}^{\samplesize}d(X_i)}.
\end{align}

Next, we present a maximum likelihood estimator for the exponential distribution that exploits the friendship paradox.

\vspace{0.1cm}
\noindent
{\bf MLE for exponential degree distribution with friendship paradox based sampling: } 
Recall (from Eq.~\eqref{eq:qk_proportional_kpk}) that the neighbor degree distribution is defined by ${\neighbordegdist_{exp}(k) \propto k\degdist_{exp}(k)}$ and hence, its normalizing constant is given by the mean $\lambda^{-1}$ of the degree distribution~$\degdist_{exp}$. Therefore, the neighbor degree distribution for a network with an exponential distribution (of the form Eq.~\eqref{eq:exponential_distribution}) is given by,
\begin{align}
\label{eq:exponential_neighbor_degree_distribution}
%	\begin{split}
\neighbordegdist_{exp}(k) = \frac{k\degdist_{exp}(k)}{\lambda^{-1}}
= \lambda^2ke^{-\lambda k} 
= \text{Gamma}(2,\lambda).
%	\end{split}
\end{align}
The friendship paradox based maximum likelihood estimator for the rate parameter $\lambda$ begins with sampling $\samplesize$ number of random neighbors $Y_1, Y_2,\dots, Y_{\samplesize}$ from the network independently. Then, the likelihood of observing the neighbor degree sequence $d(Y_1), d(Y_2), \dots, d(Y_{\samplesize})$ can be written using the neighbor degree distribution in Eq.~\eqref{eq:exponential_neighbor_degree_distribution} as, 
\begin{equation}
\hspace{-0.0cm}\mathbb{P}\{d(Y_1), \dots, d(Y_{\samplesize})\vert \alpha \} = \prod_{i = 1}^{\samplesize} \lambda^2d(Y_i)e^{-\lambda d(Y_i)}. \nonumber
\end{equation}
Therefore, the log-likelihood for the rate parameter $\lambda$ is,
\begin{align}
%	\begin{split}
%		\label{eq:loglikelihood_Y_exponential}
\loglikelihoodY(\lambda) = \ln \mathbb{P}\{d(Y_1), \dots, d(Y_{\samplesize})\vert \lambda \}
\hspace{-0.0cm} =2\samplesize\ln (\lambda) + \sum_{i=1}^{\samplesize} \Big(\ln(d(Y_i)) - \lambda d(Y_i)\Big) \nonumber
%	\end{split}
\end{align}
Then, by solving $\frac{\partial \loglikelihoodY}{\partial \lambda} = 0$, we get the FP based MLE of the rate parameter~$\lambda$ as,
\begin{align}
\label{eq:FP_MLE_exponential}
\fpMLEexponential = \frac{2\samplesize}{\sum_{i = 1}^{\samplesize}d(Y_i)}.
\end{align}

\subsection{Statistical Analysis of MLEs for Exponential Degree Distribution}

This section presents the statistical analysis (analogous to the results in Sec.~\ref{sec:statistical_analysis}) of the two MLEs for the exponential degree distribution: vanilla MLE~$\vanillaMLEexponential$ in Eq.~\eqref{eq:vanilla_MLE_exponential} and friendship paradox based MLE~$\fpMLE$ in Eq.~\eqref{eq:FP_MLE_exponential}. 

%\vspace{0.2cm}
\paragraph{Comparison of bias and variance for finite sample size: } The following two results characterizes the bias and variance of the vanilla MLE~$\vanillaMLEexponential$ and the proposed friendship paradox based MLE~$\fpMLEexponential$ under a finite samples size~$\samplesize < \infty$.

\begin{theorem}
	%[Bias and variance of the vanilla MLE of exponential degree distribution]
	\label{th:bias_variance_vanillaMLE_exponential}
	The bias and variance of the vanilla MLE~$\vanillaMLEexponential$ in Eq.~\eqref{eq:vanilla_MLE_exponential} for sample size~$\samplesize$ are given by,
	\begin{align}
	\begin{split}
	\bias\{\vanillaMLEexponential\} &= \frac{\lambda}{\samplesize -1 },\;\; \textbf{ \normalfont for } \samplesize > 1\\  \var \{\vanillaMLEexponential\} &= \frac{\samplesize^2\lambda^2}{(\samplesize-1)^2(\samplesize-2)},\;\; \textbf{ \normalfont for } \samplesize > 2.
	\end{split}
	\label{eq:bias_variance_vanillaMLE_exponential}
	\end{align}
\end{theorem}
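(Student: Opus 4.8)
The plan is to mirror the structure of the proof of Theorem~\ref{th:bias_variance_fpMLE} given in Appendix~\ref{append:bias_variance_fpMLE}, exploiting the fact that the MLE~$\vanillaMLEexponential$ in Eq.~\eqref{eq:vanilla_MLE_exponential} is $\samplesize$ times the reciprocal of a sum of independent exponential random variables. First I would record, starting from the expression in Eq.~\eqref{eq:vanilla_MLE_exponential}, that
\begin{align}
\begin{split}
\bias\{\vanillaMLEexponential\} &= \samplesize\,\mathbb{E}\Bigg\{\frac{1}{\sum_{i=1}^{\samplesize} d(X_i)}\Bigg\} - \lambda,\\
\var\{\vanillaMLEexponential\} &= \samplesize^2\,\var\Bigg\{\frac{1}{\sum_{i=1}^{\samplesize} d(X_i)}\Bigg\}.
\end{split}
\end{align}
Under uniform sampling the degrees $d(X_1),\dots,d(X_{\samplesize})$ are i.i.d.\ draws from the degree distribution, which by the governing assumption is the continuous exponential law $\degdist_{exp}(k)=\lambda e^{-\lambda k}$. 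Thus, unlike the power-law case, no logarithmic transformation is needed: the $d(X_i)$ are themselves i.i.d.\ exponential random variables with rate $\lambda$.

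Next I would invoke the standard fact that a sum of $\samplesize$ i.i.d.\ exponential$(\lambda)$ random variables follows a Gamma distribution with shape $\samplesize$ and rate $\lambda$, so that its reciprocal is inverse-Gamma distributed,
\begin{equation}
\frac{1}{\sum_{i=1}^{\samplesize} d(X_i)} \sim \text{Inv-Gamma}(\samplesize,\lambda).
\end{equation}
Using the known moments of the inverse-Gamma distribution~\cite{witkovsky2001computing} (exactly as in Appendix~\ref{append:bias_variance_fpMLE}), this gives
\begin{align}
\begin{split}
\mathbb{E}\Bigg\{\frac{1}{\sum_{i=1}^{\samplesize} d(X_i)}\Bigg\} &= \frac{\lambda}{\samplesize-1},\quad \samplesize>1,\\
\var\Bigg\{\frac{1}{\sum_{i=1}^{\samplesize} d(X_i)}\Bigg\} &= \frac{\lambda^2}{(\samplesize-1)^2(\samplesize-2)},\quad \samplesize>2.
\end{split}
\end{align}
Substituting these into the two displayed expressions above and simplifying $\frac{\samplesize\lambda}{\samplesize-1}-\lambda=\frac{\lambda}{\samplesize-1}$ yields the claimed bias, and multiplying the variance by $\samplesize^2$ yields the claimed variance in Eq.~\eqref{eq:bias_variance_vanillaMLE_exponential}.

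I do not expect any real obstacle here, since the argument is a direct transcription of the power-law proof with $\ln(d(Y_i)/\mindegree)$ replaced by the raw exponential variate $d(X_i)$ and the parameter $\alpha-2$ replaced by $\lambda$. The only point requiring care is the parametrization convention: I must be consistent about whether the Gamma/inverse-Gamma is indexed by rate or scale, so that the exponent $\lambda$ appears to the correct power in both moments. Once that bookkeeping is fixed, the algebra collapses to the stated closed forms, and the sample-size restrictions $\samplesize>1$ and $\samplesize>2$ arise precisely as the conditions for the first and second inverse-Gamma moments to be finite.
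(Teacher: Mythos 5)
Your proposal is correct and is essentially identical to the paper's own proof: the paper likewise writes $\vanillaMLEexponential$ as $\samplesize$ times the reciprocal of the sum $\sum_{i=1}^{\samplesize} d(X_i)$ of i.i.d.\ exponential$(\lambda)$ degrees, identifies that reciprocal as $\text{Inv-Gamma}(\samplesize,\lambda)$, and substitutes its first two moments. Your remark about keeping the rate-versus-scale parametrization consistent is the only subtlety, and you handle it the same way the paper does.
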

\begin{proof}
	Note that $\sum_{i=1}^{\samplesize}d(X_i)$ (in Eq.~\eqref{eq:vanilla_MLE_exponential}) is a sum of iid random variables sampled from the exponential distribution specified in~Eq.~\eqref{eq:exponential_distribution}. Since sum of iid exponential random variables is a Gamma random variable, it follows that,
	\begin{equation}
	\frac{1}{\sum_{i = 1}^{\samplesize}d(X_i)} \sim \text{Inv-Gamma}(\samplesize, \lambda)
	\end{equation}
	and hence,
	%	where, ${\text{Inv-Gamma}(\samplesize, \lambda)}$ denotes inverse-Gamma distribution~(distribution of the reciprocal of a Gamma distributed random variable) with parameters $\samplesize$ and $\lambda$. Hence, it follows that,
	\begin{align}
	\begin{split}
	\label{eq:mean_variance_inv_gamma_appendix_vanilla_exponential}
	\mathbb{E}\Bigg\{		\frac{1}{\sum_{i = 1}^{\samplesize} d(X_i)}\Bigg\} &= \frac{\lambda}{\samplesize-1},\;\; \textbf{ \normalfont for } \samplesize > 1\\
	\var\Bigg\{		\frac{1}{\sum_{i = 1}^{\samplesize} d(X_i)}\Bigg\} 	&= \frac{\lambda^2}{(\samplesize - 1)^2(\samplesize - 2)},\;\; \samplesize > 2.
	\end{split}
	\end{align}
	Then, the result follows from Eq.~\eqref{eq:vanilla_MLE_exponential} and Eq.~\eqref{eq:mean_variance_inv_gamma_appendix_vanilla_exponential}.
\end{proof}

\begin{theorem}
	%[Bias and variance of the friendship paradox based MLE of exponential degree distribution]
	\label{th:bias_variance_fpMLE_exponential}
	The bias and variance of the friendship paradox based MLE MLE~$\fpMLEexponential$ in Eq.~\eqref{eq:FP_MLE_exponential} for sample size~$\samplesize$ are given by,
	\begin{align}
	\begin{split}
	\bias\{\fpMLEexponential\} &= \frac{\lambda}{2\samplesize -1}, \;\; \textbf{ \normalfont for } \samplesize > 1\\ \var \{\fpMLEexponential\} &= \frac{2\samplesize^2\lambda^2}{(2\samplesize-1)^2(\samplesize-1)},\;\; \textbf{ \normalfont for } \samplesize > 2. 
	\label{eq:bias_variance_FPmle_exponential}
	\end{split}	
	\end{align}
\end{theorem}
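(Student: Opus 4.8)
The plan is to mirror the argument used in the proof of Theorem~\ref{th:bias_variance_vanillaMLE_exponential}, exploiting the fact that the denominator of the estimator is again an inverse-Gamma random variable, but now with a doubled shape parameter arising from the Gamma$(2,\lambda)$ form of the neighbor degree distribution. This is the natural analogue of how Theorem~\ref{th:bias_variance_fpMLE} was obtained from Theorem~\ref{th:bias_variance_vanillaMLE} in the power-law case.

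First I would recall from Eq.~\eqref{eq:exponential_neighbor_degree_distribution} that the degrees $d(Y_1),\dots,d(Y_{\samplesize})$ of the sampled random neighbors are iid with a $\text{Gamma}(2,\lambda)$ distribution. The key structural observation is that the sum $\sum_{i=1}^{\samplesize} d(Y_i)$ appearing in Eq.~\eqref{eq:FP_MLE_exponential} is a sum of $\samplesize$ iid $\text{Gamma}(2,\lambda)$ variables and is therefore itself $\text{Gamma}(2\samplesize,\lambda)$ by additivity of the shape parameter. This is the only place where the analysis departs from the vanilla exponential case: there the denominator was $\text{Gamma}(\samplesize,\lambda)$, whereas here the friendship-paradox sampling inflates the shape from $\samplesize$ to $2\samplesize$.

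Consequently, the reciprocal $1/\sum_{i=1}^{\samplesize} d(Y_i)$ follows an $\text{Inv-Gamma}(2\samplesize,\lambda)$ distribution, and I would invoke the standard mean and variance formulas for the inverse-Gamma distribution (exactly as quoted from~\cite{witkovsky2001computing} in the proof of Theorem~\ref{th:bias_variance_vanillaMLE_exponential}), but with shape $2\samplesize$ in place of $\samplesize$. This yields $\mathbb{E}\{1/\sum_{i} d(Y_i)\} = \lambda/(2\samplesize-1)$ and $\var\{1/\sum_{i} d(Y_i)\} = \lambda^2/\big((2\samplesize-1)^2(2\samplesize-2)\big)$.

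Finally, since $\fpMLEexponential = 2\samplesize/\sum_{i=1}^{\samplesize} d(Y_i)$, the bias follows from $\bias\{\fpMLEexponential\} = 2\samplesize\,\mathbb{E}\{1/\sum_{i} d(Y_i)\} - \lambda$ and the variance from $\var\{\fpMLEexponential\} = (2\samplesize)^2\,\var\{1/\sum_{i} d(Y_i)\}$; substituting the inverse-Gamma moments and simplifying (using $2\samplesize-2 = 2(\samplesize-1)$) produces the stated expressions. There is no genuine obstacle here—the proof is routine once the $\text{Gamma}(2\samplesize,\lambda)$ identity is established—so the only point requiring care is correctly tracking the shape parameter $2\samplesize$ (rather than $\samplesize$) through the inverse-Gamma moment formulas, since this is precisely what produces the $2\samplesize-1$ denominators that distinguish the friendship-paradox estimator from the vanilla one.
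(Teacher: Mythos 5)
Your proposal is correct and follows essentially the same route as the paper's own proof: both identify $\sum_{i=1}^{\samplesize} d(Y_i)$ as a sum of iid $\text{Gamma}(2,\lambda)$ variables, hence $\text{Gamma}(2\samplesize,\lambda)$, take its reciprocal to be $\text{Inv-Gamma}(2\samplesize,\lambda)$, and substitute the standard inverse-Gamma mean and variance into the expressions for bias and variance of $\fpMLEexponential$. The simplification $2\samplesize-2 = 2(\samplesize-1)$ correctly yields the stated formulas, so nothing is missing.
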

\begin{proof}
	Note that $\sum_{i=1}^{\samplesize}d(Y_i)$~(in Eq.~\eqref{eq:FP_MLE_exponential}) is sum of iid random variables sampled from the Gamma distribution in Eq.~\eqref{eq:exponential_neighbor_degree_distribution}. Since sum of iid Gamma random variables is a Gamma random variable, it follows that,
	\begin{equation}
	\frac{1}{\sum_{i = 1}^{\samplesize}d(Y_i)} \sim \text{Inv-Gamma}(2\samplesize, \lambda).
	\end{equation}
	and hence,
	\begin{align}
	\begin{split}
	\label{eq:mean_variance_inv_gamma_appendix_fp_exponential}
	\mathbb{E}\Bigg\{		\frac{1}{\sum_{i = 1}^{\samplesize} d(Y_i)}\Bigg\} &= \frac{\lambda}{2\samplesize-1},\;\; \textbf{ \normalfont for } \samplesize > 1\\
	\var\Bigg\{		\frac{1}{\sum_{i = 1}^{\samplesize} d(Y_i)}\Bigg\} 	&= \frac{\lambda^2}{(2\samplesize - 1)^2(2\samplesize - 2)},\;\; \textbf{ \normalfont for } \samplesize > 2.\\
	\end{split}
	\end{align}
	Then, the result follows from Eq.~\eqref{eq:FP_MLE_exponential} and Eq.~\eqref{eq:mean_variance_inv_gamma_appendix_fp_exponential}. 
\end{proof}

The following consequence of Theorem~\ref{th:bias_variance_vanillaMLE_exponential} and Theorem~\ref{th:bias_variance_fpMLE_exponential} shows that the proposed friendship paradox based MLE~$\fpMLEexponential$ outperforms vanilla MLE~$\vanillaMLEexponential$ for any finite sample size~$\samplesize$ for exponential degree distributions.

\begin{corollary}
	\label{cor:bias_variance_ordering_exponential}
	The bias and variance of the vanilla MLE~$\vanillaMLEexponential$ defined in Eq.~\eqref{eq:vanilla_MLE_exponential} and the friendship paradox based MLE~$\fpMLEexponential$ defined in Eq.~\eqref{eq:FP_MLE_exponential} satisfy,
	\begin{align}
	\begin{split}
	\bias\{\fpMLEexponential\} &< \bias\{\vanillaMLEexponential\},\;\; \textbf{ \normalfont for } \samplesize > 1\\
	\var\{\fpMLEexponential\} &< \var\{\vanillaMLEexponential\},\;\; \textbf{ \normalfont for } \samplesize > 2.
	\end{split}
	\end{align}
\end{corollary}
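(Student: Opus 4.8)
The plan is to treat this as an immediate corollary of Theorem~\ref{th:bias_variance_vanillaMLE_exponential} and Theorem~\ref{th:bias_variance_fpMLE_exponential}, which already supply closed-form expressions for the bias and variance of both estimators. The entire argument then reduces to comparing these explicit rational functions of the sample size~$\samplesize$ and the rate parameter~$\lambda > 0$; no further probabilistic reasoning is required beyond what those two theorems establish.

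First I would handle the bias inequality. Substituting $\bias\{\fpMLEexponential\} = \lambda/(2\samplesize - 1)$ and $\bias\{\vanillaMLEexponential\} = \lambda/(\samplesize - 1)$, the claim $\bias\{\fpMLEexponential\} < \bias\{\vanillaMLEexponential\}$ is, after cancelling the common positive factor~$\lambda$ and noting that both denominators are positive for $\samplesize > 1$, equivalent to $2\samplesize - 1 > \samplesize - 1$. This simplifies to $\samplesize > 0$, which holds throughout the stated range, so the bias part is done.

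Next I would verify the variance inequality, which is the only step with a nontrivial computation. Substituting the two variance expressions and cancelling the common positive factor $\samplesize^2\lambda^2$, the claim becomes
\begin{equation}
\frac{2}{(2\samplesize - 1)^2 (\samplesize - 1)} < \frac{1}{(\samplesize - 1)^2 (\samplesize - 2)}.
\end{equation}
Since every factor is strictly positive for $\samplesize > 2$, I would clear denominators and reduce to the polynomial inequality $2(\samplesize - 1)(\samplesize - 2) < (2\samplesize - 1)^2$. Expanding gives $2\samplesize^2 - 6\samplesize + 4 < 4\samplesize^2 - 4\samplesize + 1$, i.e.~$0 < 2\samplesize^2 + 2\samplesize - 3$, which is manifestly true for all $\samplesize > 2$. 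This establishes the variance ordering and hence the corollary.

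The main obstacle, such as it is, is purely bookkeeping: one must check that each denominator is positive on the relevant range of $\samplesize$ before cross-multiplying, so that the direction of the inequality is preserved. Because the argument exactly parallels Corollary~\ref{cor:bias_variance_ordering} for the power-law case, no new ideas are needed and the result follows directly once the two preceding theorems are in hand.
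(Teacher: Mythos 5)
Your proof is correct and follows exactly the route the paper intends: the corollary is stated there as an immediate consequence of Theorems~\ref{th:bias_variance_vanillaMLE_exponential} and~\ref{th:bias_variance_fpMLE_exponential}, with the elementary algebraic comparison left implicit, and your computation (reducing the variance claim to $0 < 2\samplesize^2 + 2\samplesize - 3$ after checking positivity of the denominators) supplies that missing detail correctly.
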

%Having established that the proposed friendship paradox based MLE~$\fpMLE$ outperforms the vanilla MLE~$\vanillaMLE$ under any finite sample size~$\samplesize < \infty$, 
We now turn to the case where the sample size~$\samplesize$ tends to infinity. 

%\vspace{0.2cm}
\paragraph{Comparison of the asymptotic properties of the MLEs: }
We have the following two results (whose proofs follow from arguments similar to Theorem~\ref{th:strong_consistency} and Theorem~\ref{th:CRLB})
%for the power-law degree distribution) 
for the MLEs of rate parameter~$\lambda$ of the exponential degree distribution:
\begin{theorem}
	%[Asymptotic unbiasedness and strong consistency of MLEs for exponential degree distribution]
	\label{th:strong_consistency_exponential}
	The vanilla MLE~$\vanillaMLEexponential$~(defined in Eq.~\eqref{eq:vanilla_MLE_exponential} and friendship paradox based MLE~$\fpMLEexponential$~(defined in Eq.~\eqref{eq:FP_MLE_exponential}) are asymptotically unbiased and strongly consistent~i.e.~converges to the true rate parameter~$\lambda$ with probability~$1$ as the sample size $\samplesize$ tends to infinity.  
\end{theorem}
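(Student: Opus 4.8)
The plan is to follow the same two-part structure used in the proof of Theorem~\ref{th:strong_consistency} for the power-law case (Appendix~\ref{append:strong_consistency}), establishing asymptotic unbiasedness and strong consistency separately and treating the two estimators in parallel. Since the finite-sample bias and variance of both estimators are already available, the asymptotic statements follow quickly.

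First I would dispose of asymptotic unbiasedness directly from the finite-sample bias expressions in Theorem~\ref{th:bias_variance_vanillaMLE_exponential} and Theorem~\ref{th:bias_variance_fpMLE_exponential}. Because $\bias\{\vanillaMLEexponential\} = \frac{\lambda}{\samplesize - 1}$ and $\bias\{\fpMLEexponential\} = \frac{\lambda}{2\samplesize - 1}$, both biases tend to $0$ as $\samplesize \to \infty$, which gives asymptotic unbiasedness of both estimators.

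For strong consistency I would invoke the strong law of large numbers together with the continuous mapping theorem, exactly mirroring the power-law argument. For the vanilla estimator, write $\vanillaMLEexponential = 1/\bar{D}_\samplesize$ where $\bar{D}_\samplesize = \frac{1}{\samplesize}\sum_{i=1}^\samplesize d(X_i)$ is the empirical mean of iid $\mathrm{Exponential}(\lambda)$ observations (by Eq.~\eqref{eq:exponential_distribution}). The strong law gives $\bar{D}_\samplesize \to \mathbb{E}\{d(X)\} = 1/\lambda$ almost surely, and since $x \mapsto 1/x$ is continuous on the positive reals, $\vanillaMLEexponential \to \lambda$ almost surely. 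For the friendship paradox estimator, write $\fpMLEexponential = 2/\bar{E}_\samplesize$ where $\bar{E}_\samplesize = \frac{1}{\samplesize}\sum_{i=1}^\samplesize d(Y_i)$ and the $d(Y_i)$ are iid $\mathrm{Gamma}(2,\lambda)$ by Eq.~\eqref{eq:exponential_neighbor_degree_distribution}. The strong law gives $\bar{E}_\samplesize \to \mathbb{E}\{d(Y)\} = 2/\lambda$ almost surely, so again by continuity $\fpMLEexponential \to \lambda$ almost surely.

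I expect no serious obstacle: the argument is a direct analogue of the power-law proof, and the only point requiring a word of care is that the continuous mapping step is legitimate because the limiting sample means $1/\lambda$ and $2/\lambda$ are strictly positive, so the reciprocal map is continuous at the limit and the denominators are almost surely nonzero. The reason the friendship paradox estimator behaves identically well is that the sampling merely replaces the $\mathrm{Exponential}(\lambda)$ observations with $\mathrm{Gamma}(2,\lambda)$ observations whose mean is still a fixed multiple of $\lambda^{-1}$, preserving the invertibility of the map from the sample mean back to the parameter $\lambda$.
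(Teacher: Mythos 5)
Your proposal is correct and follows exactly the route the paper intends: the paper gives no separate proof of this theorem, stating only that it ``follows from arguments similar to Theorem~\ref{th:strong_consistency},'' and your argument (bias expressions from Theorems~\ref{th:bias_variance_vanillaMLE_exponential} and~\ref{th:bias_variance_fpMLE_exponential} vanishing for asymptotic unbiasedness, plus the strong law of large numbers and continuity of the reciprocal map applied to the empirical means of the iid $\mathrm{Exponential}(\lambda)$ and $\mathrm{Gamma}(2,\lambda)$ degree observations) is precisely the exponential-distribution analogue of the proof in Appendix~\ref{append:strong_consistency}. No gaps.
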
 
%\begin{proof}
%	The proof follows from arguments similar to Theorem~\ref{th:strong_consistency}.
%\end{proof}

%The following result (analogous to Theorem~\ref{th:CRLB} for the power-law degree distributions) characterizes the two Cram\`{e}r-Rao Lower Bounds,~$\CRLBvanilla$ and $\CRLBfp$, of the two estimates,~$\vanillaMLEexponential$ and $\fpMLEexponential$.

\begin{theorem}
	\label{th:CRLB_exponential}
	The  Cram\`{e}r-Rao Lower Bounds of the vanilla MLE $\vanillaMLEexponential$ in Eq.~\eqref{eq:vanilla_MLE_exponential} and the friendship paradox based MLE~$\fpMLEexponential$ in Eq.~\eqref{eq:FP_MLE_exponential} are given by,
	\begin{align}
	\hspace{-0.25cm}\CRLBvanilla(\lambda) = \frac{\lambda^2}{\samplesize}, \quad	\CRLBfp(\lambda) = \frac{\lambda^2}{2\samplesize}.
	\end{align}
\end{theorem}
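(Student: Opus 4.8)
The plan is to mirror the proof of Theorem~\ref{th:CRLB} for the power-law case, since for a scalar parameter the Cram\`er-Rao Lower Bound equals the reciprocal of the Fisher information. That is, I would start from the definitions
\begin{align}
\CRLBvanilla(\lambda) = -\Bigg[\mathbb{E}\bigg\{\frac{\partial^2 \loglikelihoodX}{\partial \lambda^2}\bigg\}\Bigg]^{-1}, \qquad \CRLBfp(\lambda) = -\Bigg[\mathbb{E}\bigg\{\frac{\partial^2 \loglikelihoodY}{\partial \lambda^2}\bigg\}\Bigg]^{-1}, \nonumber
\end{align}
where $\loglikelihoodX$ and $\loglikelihoodY$ are the two log-likelihoods derived earlier in this appendix for the exponential case.

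For the vanilla bound, I would take $\loglikelihoodX(\lambda) = \samplesize\ln(\lambda) - \lambda \sum_{i=1}^{\samplesize} d(X_i)$ and differentiate twice with respect to $\lambda$. The first derivative is $\samplesize/\lambda - \sum_{i} d(X_i)$ and the second derivative is $-\samplesize/\lambda^2$. The key observation (which makes this routine) is that the second derivative is \emph{deterministic}: it carries no dependence on the data $d(X_i)$, because the log-likelihood is linear in the sufficient statistic $\sum_i d(X_i)$. Hence the expectation is trivial, the Fisher information is simply $\samplesize/\lambda^2$, and inverting gives $\CRLBvanilla(\lambda) = \lambda^2/\samplesize$.

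For the friendship paradox bound, I would repeat the computation with $\loglikelihoodY(\lambda) = 2\samplesize\ln(\lambda) + \sum_{i=1}^{\samplesize}\big(\ln(d(Y_i)) - \lambda d(Y_i)\big)$. The term $\ln(d(Y_i))$ does not involve $\lambda$ and drops out upon differentiation, so the first derivative is $2\samplesize/\lambda - \sum_i d(Y_i)$ and the second derivative is $-2\samplesize/\lambda^2$, again a constant in the data. Taking the negative expectation and inverting yields $\CRLBfp(\lambda) = \lambda^2/(2\samplesize)$. There is no genuine obstacle here; the only point worth stating explicitly is that both Fisher informations are data-independent, which is precisely why the expectation step is immediate. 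Comparing the two expressions then shows $\CRLBfp(\lambda) = \tfrac{1}{2}\CRLBvanilla(\lambda)$, confirming the sharper asymptotic efficiency of the proposed estimator in the exponential setting as well.
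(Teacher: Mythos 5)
Your proposal is correct and is exactly the argument the paper intends: the paper gives no separate proof of Theorem~\ref{th:CRLB_exponential}, stating only that it ``follows from arguments similar to Theorem~\ref{th:CRLB}'', which is precisely the computation you carry out (negative expected second derivative of each log-likelihood, trivial because both second derivatives are data-independent, then invert). Both of your resulting expressions match the theorem.
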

%\begin{proof}
%	The proof follows from arguments similar to Theorem~\ref{th:CRLB}.
%\end{proof}

\vspace{0.1cm}
\noindent
{\bf Summary of Results for Exponential Degree Distributions: } The analytical results presented in Sec.~\ref{sec:MLEs} and Sec.~\ref{sec:statistical_analysis} for power-law degree distributions extend to exponential degree distributions as well. More specifically, the proposed friendship paradox based MLE~$\fpMLEexponential$~(defined in Eq.~\eqref{eq:FP_MLE_exponential}) outperforms the vanilla MLE~$\vanillaMLEexponential$~(defined in Eq.~\eqref{eq:vanilla_MLE_exponential}) in the finite sample regime (in terms of bias and variance) as well as the asymptotic regime~(in terms of CRLB).

\section{Details for Reproducing the Numerical and Empirical Results}
\label{append:reproducibility}

This appendix provides additional details on the simulations that produced all numerical results~(Fig.~\ref{fig:Var_MSE_CRLB} and Fig.~\ref{fig:SampleSize_vs_MSE} in Sec.~\ref{sec:numerical_results}) and empirical results~(Fig.~\ref{fig:EmpiricalResults_Biological}, Fig.~\ref{fig:EmpiricalResults_Technological_and_Informational} and Table~\ref{table:empirical_results} in Sec.~\ref{sec:empirical_results}) 
in order to facilitate the full reproducibility of the results in this paper.

\paragraph{\bf Code Availability:} {Codes that produced all three figures in Sec.~\ref{sec:numerical_results} and Sec.~\ref{sec:empirical_results} are available in the Github repository~\cite{anonymousGitHubRepo2020KDD}. Further, a detailed outline of the steps followed in each numerical experiment as well as empirical methodology with addition details is presented below.}

\paragraph{\bf Reproducibility:}
All simulations were performed using version 3.7.3 of the software \textit{Python}. The random number generators were initialized with seed 123 (using the \textit{random.seed} function) in each simulation. 

\subsection{Simulation Details for Numerical Experiments}
\paragraph{\bf Simulation Details for Fig.~\ref{fig:Var_MSE_CRLB}}
{Twenty five} equally spaced values of the parameter $\alpha$ in the range from $2.1$ to $3.5$~(satisfying the Assumption~\ref{assumption:alpha_range} and including typical values observed in most real world networks)  were considered. For each considered value of $\alpha$ and $\mindegree$,  $50000$ random variables from the power-law distribution in Eq.~\eqref{eq:power_law_distribution} were sampled (using the \textit{scipy.stats.pareto.rvs} function) independently and rounded off to the nearest integer. 
%This sequence of integers were used to sample a network of  $50000$ nodes from the configuration model (using the \textit{networkx.configuration\_model} function). More specifically, the configuration model generates $k$~half-edges~(also called ``stubs" or ``spokes") for each of the  $50000$ nodes in the network where~$k$ is the number obtained by rounding the realizations sampled independently from the power-law distribution~$\degdist$ defined in Eq.~\eqref{eq:power_law_distribution}. Then, each half-edge is connected to another randomly selected half-edge, yielding a network with a power-law degree distribution\footnote{The configuration model can result in parallel edges and self-loops. However, this will not have any effect on the results since our focus is only on the degree distribution (and thus, the sequence of degrees) which is not affected by the presence of parallel edges and self-loops.}. The produced network has a degree sequence which is equal to the input degree sequence\footnote
The degree sequences produced using this simulation setup are guaranteed to satisfy Assumptions~\ref{assumption:continuous_distribution},~\ref{assumption:alpha_range}. This helps to rule out the possibility that the conclusions drawn from the numerical results could be due to violations of assumptions made on the underlying parametric model (such as not possessing a power-law degree distribution or parameters out of the considered range). Then, $n = 100$ elements (denoted by $d(X_1), \dots, d(X_\samplesize)$) were drawn from the degree sequence by uniformly and independently sampling. Another $n = 100$ elements (denoted by $d(Y_1), \dots, d(Y_\samplesize)$) were drawn from the degree sequence by independently sampling with a probability proportional to the value of each element.\footnote{This is equivalent to sampling an edge uniformly and then taking the degree of the node at a random end of that edge~i.e.~equivalent to sampling from the neighbor degree distribution~$\neighbordegdist$ given in Eq.~\eqref{eq:qk_proportional_kpk}. This is because the neighbor degree distribution~$\neighbordegdist(k)$ is the same as degree distribution~$\degdist(k)$ scaled by degree~$k$. } Next, the vanilla MLEs~$\vanillaMLE, \vanillaDiscreteMLE$ (by plugging in $d(X_1), \dots, d(X_\samplesize)$ to expressions in Eq.~\eqref{eq:vanilla_MLE} and Eq.~\eqref{eq:vanilla_discrete_MLE}, respectively)  and friendship paradox based MLEs~$\fpMLE, \fpDiscreteMLE$ (by plugging in $d(Y_1), \dots, d(Y_\samplesize)$ to expression in Eq.~\eqref{eq:FP_MLE} and Eq.~\eqref{eq:FP_discrete_MLE}, respectively) were computed and stored. This process was repeated  {$5000$} times (for each considered value of the power-law exponent~$\alpha$ and the minimum degree~$\mindegree$) and the empirically estimated bias, variance and MSE using all  {$5000$} independent iterations were plotted in Fig.~\ref{fig:Var_MSE_CRLB}. The Cram\`{e}r-Rao Lower Bounds were computed by directly plugging in the values of the power-law exponent~$\alpha$ into the expressions in Theorem~\ref{th:CRLB}.

\paragraph{\bf Simulation Details for Fig.~\ref{fig:SampleSize_vs_MSE}}
For Fig.~\ref{fig:SampleSize_vs_MSE}, the steps followed for obtaining Fig.~\ref{fig:Var_MSE_CRLB} (detailed above) were repeated for four different values of the sample size $n = 50, 100, 150, 200$ for each considered value of the minimum degree~$\mindegree$. The obtained empirical MSE values for different values of the sample size~$\samplesize$ are plotted in different colors in Fig.~\ref{fig:SampleSize_vs_MSE}.

\subsection{Details on the Real-World Datasets}
The $18$ network degree sequences that we use in Sec.~\ref{sec:empirical_results} are obtained from the dataset used for the study presented in~\cite{broido2019scale}. The full network datasets are also available via the ICON network data library~\cite{ICONColorado}. The degree sequence files that we use~(which were extracted by the authors of \cite{broido2019scale}) for each of the $18$ datasets and the code that produces the empirical results in Sec.~\ref{sec:empirical_results} are available in the Github repository~\cite{anonymousGitHubRepo2020KDD}. 

% The networks that we use are as follows:
%\begin{compactenum}
%	\item[Biological Network~1]: Network of protein-protein interactions in Saccharomyces cerevisiae (budding yeast), measured by yeast 2-hybrid experiments in 2000~\cite{ito2001comprehensive}.
%	
%	\item[Biological Network~2]: A network of protein-protein interactions in S.~cerevisiae (yeast) measured by yeast 2-hybrid experiments in 2007~\cite{yu2008high}.
%
%	\item[Biological Network~3]: A bipartite network of plants and pollinators from Kyoto University Forest of Ashu, Japan, from 1984 to 1987~\cite{kato1990insect}.
%
%	\item[Biological Network~4-6]: Networks of protein-protein interactions in two species~(S. pombe 972H (fission yeast), M. musculus (mouse)), extracted by combining the results of high-throughput yeast two-hybrid experiments and literature curation of known interactions~\cite{das2012hint}.
%	
%	\item[Biological Network~7]:  
%	
%	\item[Biological Network~8]:  
%	
%	\item[Biological Network~9]:  
%	
%	\item[Biological Network~10]:  
%	
%	\item[Biological Network~11]:  
%	
%	\item[Biological Network~12]:  
%
%\end{compactenum}

\end{document}